\definecolor{pennblue}{cmyk}{1,0.65,0,0.30}
\definecolor{pennred}{cmyk}{0,1,0.65,0.34}
\definecolor{mygreen}{rgb}{0.10,0.50,0.10}
\newcommand{\includesvg}[2][scale=1]{\includegraphics[#1]{#2.pdf}}
\algrenewcommand\algorithmicdo{}
\newtheorem{theorem}{Theorem}
\newtheorem*{theorem*}{Theorem}
\newtheorem{proposition}{Proposition}
\newtheorem{corollary}{Corollary}
\newtheorem{lemma}{Lemma}
\theoremstyle{definition}
\theoremstyle{definition}
\newtheorem{remark}{Remark}
\newtheorem*{remark*}{Remark}
\newtheoremstyle{assume}
  {3pt}
  {3pt}
  {}
  {}
  {\bf}
  {}
  { }
  {\thmname{#1}.\thmnumber{#2}\thmnote{ \textnormal{(\textit{#3})}}}
\theoremstyle{assume}
\DeclareMathOperator{\E}{\mathbb{E}}
\DeclareMathOperator{\trace}{Tr}
\DeclareMathOperator{\diag}{diag}
\DeclareMathOperator*{\argmin}{argmin}
\DeclareMathOperator*{\argmax}{argmax}
\DeclareMathOperator*{\colspan}{colspan}
\newcommand{\vect}[2]{\ensuremath{[\begin{array}{#1} #2 \end{array}]}}
\newcommand{\norm}[1]{\ensuremath{\left\| #1 \right\|}}
\newcommand{\abs}[1]{\ensuremath{{\left\vert #1 \right\vert}}}
\newcommand{\calA}{\ensuremath{\mathcal{A}}}
\newcommand{\calB}{\ensuremath{\mathcal{B}}}
\newcommand{\calG}{\ensuremath{\mathcal{G}}}
\newcommand{\calK}{\ensuremath{\mathcal{K}}}
\newcommand{\calO}{\ensuremath{\mathcal{O}}}
\newcommand{\calS}{\ensuremath{\mathcal{S}}}
\newcommand{\calT}{\ensuremath{\mathcal{T}}}
\newcommand{\calV}{\ensuremath{\mathcal{V}}}
\newcommand{\calX}{\ensuremath{\mathcal{X}}}
\newcommand{\bzero}{\ensuremath{\bm{0}}}
\newcommand{\bA}{\ensuremath{\bm{A}}}
\newcommand{\bB}{\ensuremath{\bm{B}}}
\newcommand{\bC}{\ensuremath{\bm{C}}}
\newcommand{\bD}{\ensuremath{\bm{D}}}
\newcommand{\bH}{\ensuremath{\bm{H}}}
\newcommand{\bI}{\ensuremath{\bm{I}}}
\newcommand{\bK}{\ensuremath{\bm{K}}}
\newcommand{\bL}{\ensuremath{\bm{L}}}
\newcommand{\bP}{\ensuremath{\bm{P}}}
\newcommand{\bR}{\ensuremath{\bm{R}}}
\newcommand{\bV}{\ensuremath{\bm{V}}}
\newcommand{\bW}{\ensuremath{\bm{W}}}
\newcommand{\bX}{\ensuremath{\bm{X}}}
\newcommand{\bY}{\ensuremath{\bm{Y}}}
\newcommand{\bZ}{\ensuremath{\bm{Z}}}
\newcommand{\bb}{\ensuremath{\bm{b}}}
\newcommand{\br}{\ensuremath{\bm{r}}}
\newcommand{\bs}{\ensuremath{\bm{s}}}
\newcommand{\bu}{\ensuremath{\bm{u}}}
\newcommand{\bv}{\ensuremath{\bm{v}}}
\newcommand{\bw}{\ensuremath{\bm{w}}}
\newcommand{\bx}{\ensuremath{\bm{x}}}
\newcommand{\by}{\ensuremath{\bm{y}}}
\newcommand{\bz}{\ensuremath{\bm{z}}}
\newcommand{\bLambda}{\ensuremath{\bm{\Lambda}}}
\newcommand{\bSigma}{\ensuremath{\bm{\Sigma}}}
\newcommand{\setR}{\ensuremath{\mathbb{R}}}
\newcommand{\setS}{\ensuremath{\mathbb{S}}}
\newcommand{\setC}{\ensuremath{\mathbb{C}}}
\def\st/{\textsuperscript{st}}
\def\nd/{\textsuperscript{nd}}
\def\rd/{\textsuperscript{rd}}
\def\th/{\textsuperscript{th}}
\newcommand{\del}{\ensuremath{\partial}}
\newcommand{\bxb}{\ensuremath{\bar{\bx}}}
\newcommand{\byb}{\ensuremath{\bar{\by}}}
\newcommand{\bxh}{\ensuremath{\hat{\bx}}}
\newcommand{\bzh}{\ensuremath{\hat{\bz}}}
\newcommand{\byt}{\ensuremath{\tilde{\by}}}
\newcommand{\MSE}{\textup{MSE}}
\newcommand{\bKb}{\ensuremath{\bm{{\bar{K}}}}}
\newcommand{\nodeset}{\calV}
\newcommand{\graph}{\mathbb{G}}
\newcommand{\levb}[1]{\bar{\ell}_{#1}}
\title{Greedy Sampling of Graph Signals}
\author{Luiz~F.~O.~Chamon and~Alejandro~Ribeiro%
\thanks{Department of Electrical and Systems Engineering, University of Pennsylvania.
 e-mail: \mbox{\texttt{luizf@seas.upenn.edu}}, \mbox{\texttt{aribeiro@seas.upenn.edu}}.
This work was supported by NSF CCF 1717120 and ARO W911NF1710438.
Part of the results in this paper appeared in~[24] and~[32].}%
}
\begin{document}
\maketitle
\begin{abstract}
Sampling is a fundamental topic in graph signal processing, having found applications in estimation, clustering, and video compression. In contrast to traditional signal processing, the irregularity of the signal domain makes selecting a sampling set non-trivial and hard to analyze. Indeed, though conditions for graph signal interpolation from noiseless samples exist, they do not lead to a unique sampling set. The presence of noise makes choosing among these sampling sets a hard combinatorial problem. Although greedy sampling schemes are commonly used in practice, they have no performance guarantee. This work takes a twofold approach to address this issue. First, universal performance bounds are derived for the Bayesian estimation of graph signals from noisy samples. In contrast to currently available bounds, they are not restricted to specific sampling schemes and hold for any sampling sets. Second, this paper provides near-optimal guarantees for greedy sampling by introducing the concept of approximate submodularity and updating the classical greedy bound. It then provides explicit bounds on the approximate supermodularity of the interpolation mean-square error showing that it can be optimized with worst-case guarantees using greedy search even though it is not supermodular. Simulations illustrate the derived bound for different graph models and show an application of graph signal sampling to reduce the complexity of kernel principal component analysis.
\end{abstract}
\begin{IEEEkeywords}
Graph signal processing, sampling, approximate submodularity, greedy algorithms, kernel multivariate analysis.
\end{IEEEkeywords}

\section{Introduction}
	\label{S:Intro}

Graph signal processing~(GSP) is an emerging field that studies signals supported on irregular domains~\cite{Shuman13e, Sandryhaila13d}. It extends traditional signal processing techniques to more intricate data structures, finding applications in sensor networks, image processing, and clustering, to name a few~\cite{Narang12p, Tremblay16c, Zhu12a}. Extensions of sampling, in particular, have attracted considerable interest from the GSP community~\cite{Pesenson08s, Pesenson10s, Narang13s, Shomorony14s, Chen15d, Anis16e, Tsitsvero16s, Chen16s, Antonio16s, Chepuri16s}. This is not surprising given the fundamental role of sampling in signal processing. Sampling methods in GSP are broadly divided into two categories: \emph{selection sampling}, in which the graph signal is observed at a subset of nodes~\cite{Chen15d}, and \emph{aggregation sampling}, in which the signal is observed at a single node for many applications of the graph shift~\cite{Antonio16s}. This work focuses on the former.

As in classical signal processing, samples are only useful inasmuch as they represent the original signal. Conditions under which it is possible to reconstruct a graph signal from noiseless samples can be found in~\cite{Pesenson08s, Pesenson10s, Narang13s, Shomorony14s, Chen15d, Anis16e, Tsitsvero16s}. These, however, do not necessarily lead to unique sampling sets. In fact, for finite graphs with bounded weights, it can be shown that almost every sampling set larger than the bandwidth of the signal guarantees perfect reconstruction~\cite{Shomorony14s, Chen15d, Anis16e, Tsitsvero16s}. In the presence of noise, however, it is not straightforward which of these exponentially many sets performs best, an issue that becomes more severe as the measurement signal-to-noise ratio~(SNR) decreases. In general, selecting an optimal sampling set is NP-hard~\cite{Krause08n, Das11s, Sagnol13a, Ranieri14n}.

In~\cite{Chen15d, Chen16s}, this issue was addressed using randomized sampling schemes, for which optimal sampling distributions and performance bounds were derived for different types of graphs and graph signals. For high SNR, it was shown that sampling proportionally to the leverage score~(or its square-root) approximates the sampling distribution that minimizes the reconstruction mean-square error~(MSE). Alternatively, a convex relaxation approach was adopted in~\cite{Gama16r}, where the sampling set selection problem was cast as a binary semi-definite program~(SDP) and solved by relaxing the binary constraint and thresholding the solution. Rounding and truncation can also be used to approximate the solution of this binary problem. Nevertheless, greedy sampling remains pervasive and has proven successful in many applications~\cite{Chen15d, Anis16e, Shomorony14s, Thanou14l, Tsitsvero16s, Gama16r}, though performance analyses are available only for surrogate figures of merit of the MSE, such as the log-determinant~\cite{Chepuri16s}.

To be sure, this success is warranted by the attractive features of greedy algorithms for large-scale problems. First, their complexity is polynomial in the deterministic case and randomized versions exist that are linear in the size of the ground set, which in this case is the number of nodes in the graph~\cite{Mirzasoleiman15l}. Also, since they build the solution sequentially, they can be interrupted at any time if, for instance, a desired performance level is reached. More importantly, there is an upper bound on the suboptimality of the greedy solution to monotonic supermodular function minimization problems. This is indeed why greedy algorithms are often used in sensor selection, experimental design, and machine learning~\cite{Ranieri14n, Krause08n, Sagnol13a, Das11s, Bach14l}. However, the main performance measure in GSP, namely the MSE, is not supermodular in general~\cite{Chamon16n}.

In this work, we study the reconstruction~(interpolation) performance of greedy sampling schemes in GSP and set out to reconcile the empirical success of greedy MSE minimization with the fact that it is not supermodular. First, in contrast to~\cite{Chen15d, Chen16s}, we adopt a Bayesian approach to graph signal estimation and consider sampling to be deterministic~(Section~\ref{S:GraphSignal}). Then, we derive bounds on the interpolation MSE that are universal in the sense that they hold for all sampling sets and any sampling method~(Section~\ref{S:Bounds}). These universal bounds are explicit, tractable, and provide practical means of benchmarking the MSE performance of any sampling scheme. Numerical analyses show that the bounds are tight when signal and noise are homeoscedastic. Finally, we develop the concept of \emph{approximate supermodularity} introduced in~\cite{Chamon16n} and provide near-optimal guarantees for the greedy minimization of the interpolation MSE~(Section~\ref{S:greedySampling}). This result justifies the use of greedy sampling set selection in GSP and explains its success.

To illustrate the practical value of these results, we recall that the concept of sampling is also at the core of statistical methods, such as data subsetting and variable selection, that are crucial for \emph{big data} applications~\cite{Woodruff14s, Feldman13t}. Kernel methods, in particular, are prone to complexity issues in large data sets. For instance, performing kernel principal component analysis~(kPCA) on a data set of size~$n$ requires $n^2$~kernel evaluations~(KEs) and $\Theta(n^3)$~operations, while extracting projections for new data takes $n$~KEs and $\Theta(np)$~operations, where $p$ is the number of principal components~(PCs) retained~\cite{Scholkopf98n, Jeronimo13k}. We show that this problem can be cast in the context of GSP and that greedy sampling can be used to reduce the complexity of projections by over~$95\%$ at a small performance cost~(Section~\ref{S:kPCA}).

\textbf{Notation}: Lowercase boldface letters represent vectors~($\bx$), uppercase boldface letters are matrices~($\bX$), and calligraphic letters denote sets~($\calA$). We write $\abs{\calA}$ for the cardinality of~$\calA$ and denote the empty set by~$\{\}$. Set subscripts refer either to the vector obtained by keeping only the elements with indices in the set~($\bx_\calA$) or to the submatrix whose columns have indices in the set~($\bX_\calA$). To say~$\bX$ is a positive semi-definite~(PSD) matrix we write~$\bX \succeq 0$, so that for $\bX,\bY \in \setC^{n \times n}$, $\bX \preceq \bY \Leftrightarrow \bb^{H} \bX \bb \leq \bb^{H} \bY \bb$, for all $\bb \in \setC^n$. The set of PSD matrices is denoted~$\setS_{+}$ and the set of non-negative real numbers is denote~$\setR_{+}$. Finally, we take the derivative of a function $f$ with respect to an $n \times 1$ vector $\bx$ to yield the $1 \times n$ gradient vector, i.e.,
$\del f/\del \bx = \left[\ \del f/\del x_1 \ \cdots \ \del f/\del x_n \ \right]$~\cite{Kailath00l}.

\section{Sampling and Interpolation of Graph Signals}
	\label{S:GraphSignal}

A graph-supported signal, or \emph{graph signal} for short, is an assignment of values to the nodes of a graph. Formally, let $\graph$ be a weighted graph with node set $\nodeset$, having cardinality $\abs{\nodeset} = n$, and define a graph signal to be an injective mapping $\sigma: \nodeset \to \setC$. For an ordering of the nodes in~$\calV$, this signal can be represented as an $n \times 1$ vector that captures its values at each node:
\begin{equation}\label{E:graphSignalVector}
	\bx = \vect{ccc}{ \sigma(u_1) & \cdots & \sigma(u_n) }^T
		\text{,} \quad u_i \in \nodeset
		\text{.}
\end{equation}
In what follows, we assume that the node ordering is fixed, so that we can index $\bx$ using elements of~$\calV$. For instance, we write~$\bx_{\{u_i, u_j, u_k\}} = \left[\ \sigma(u_i) \ \sigma(u_j) \ \sigma(u_k) \ \right]^T$.

Of interest to GSP is the spectral representation of the signal~$\sigma$~(or~$\bx$), which depends on the graph on which it is supported. Indeed, let~$\bA \in \setC^{n \times n}$ be a matrix representation of~$\graph$. Usual choices include the adjacency matrix or one of the discrete Laplacians~\cite{Shuman13e, Sandryhaila13d}. Assume that $\bA$ is consistent with the signal vector~\eqref{E:graphSignalVector} in the sense that they employ the same ordering of the nodes in $\nodeset$. Furthermore, assume that $\bA$ is normal, i.e., that there exist~$\bV \in \setC^{n \times n}$ unitary and~$\bD \in \setR^{n \times n}$ diagonal such that~$\bA = \bV \bD \bV^{H}$, where $\cdot^{H}$ is the Hermitian~(conjugate transpose) operator~\cite{Horn13}. Then, the \emph{graph Fourier transform} of~$\bx$ is given by~\cite{Shuman13e, Sandryhaila13d}
\begin{equation}\label{E:GFT}
	\bxb = \bV^{H} \bx
		\text{.}
\end{equation}
Observe that if $\bA$ is normal we obtain a spectral energy conservation property analog to Parseval's theorem in classical signal processing: it is ready to see that~$\norm{\bxb}_2 = \norm{\bx}_2$ if and only if~$\bV$ in~\eqref{E:GFT} is unitary, which holds if and only if~$\bA$ is normal~\cite{Horn13}.

Similar to traditional signal processing, a graph signal~$\bx$ is said to be \emph{spectrally sparse}~(\emph{ssparse}) when its spectral representation is sparse. Explicitly, $\bx$ is~\emph{$\calK$-ssparse} if~$\bxb$ in~\eqref{E:GFT} is such that~$\bxb_{\nodeset \setminus \calK}$ is a zero vector. Then,
\begin{equation}\label{E:bandlimited}
	\bx = \bV_\calK \bxb_\calK
		\text{.}
\end{equation}
Note that spectrally sparse signals are a superset of bandlimited~(``low-pass'') signals. Hence, all results in this work apply to bandlimited signals regardless of the graph frequency order adopted~\cite{Chen16s, Anis16e, Chen15d}.

The interest in $\calK$-ssparse or bandlimited graph signals is motivated similarly to traditional signal processing: these signals can be sampled and interpolated without loss of information. Indeed, take sampling to be the operation of observing the value of a graph signal on $\calS \subseteq \nodeset$, the \emph{sampling set}. Then, there exists a set~$\calS$ of size~$\abs{\calK}$ such that $\bx$ can be recovered exactly from $\bx_\calS$~\cite{Shomorony14s, Chen15d, Anis16e, Tsitsvero16s}. If, however, only a corrupted version of~$\bx_\calS$ is available, then~$\bx$ can only be approximated. To do so, the next section poses noisy interpolation as a Bayesian estimation problem, from which the minimum MSE interpolation operator can be derived. This then allows us to provide universal bounds on the reconstruction error and give near-optimal guarantees for greedy sampling strategies.

\subsection{Graph signal interpolation}
	\label{S:Interpolation}

We study graph signal interpolation as a Bayesian estimation problem. Formally, let~$\bx \in \setC^n$ be a graph signal and~$\calS \subseteq \calV$ be a sampling set. We wish to estimate
\begin{equation}\label{E:z}
	\bz = \bH \bx
		\text{,}
\end{equation}
for some matrix~$\bH \in \setC^{m \times n}$ based on the samples~$\by_\calS$ taken from
\begin{equation}\label{E:y}
	\by = \bx + \bw
		\text{,}
\end{equation}
where~$\bw \in \setC^n$ is a circular zero-mean noise vector. By circular we mean that its \emph{relation matrix} vanishes, i.e., that~$\E \bw \bw^T = \bzero$~\cite{Adali14o}. Note that~\eqref{E:z} accounts for scenarios in which we are not interested in the graph signal itself but on a post-processed value, such as the output of a linear classifier or estimator~(e.g., Section~\ref{S:kPCA}). The usual graph signal interpolation problem from~\cite{Chamon16n, Chamon16u, Shomorony14s, Chen15d, Anis16e, Tsitsvero16s, Chen16s} is recovered by taking~$\bH = \bI$.

The prior distribution of~$\bx$ reflects the fact that the graph signal is~$\calK$-ssparse by assuming it is a circular zero-mean distribution with covariance matrix~$\bSigma = \bx \bx^H = \bV_\calK \bLambda \bV_\calK^H$ for~$\bLambda = \diag(\lambda_i)$, $\lambda_i \in \setR_+$. We assume without loss of generality that~$\bLambda$ is full-rank. Otherwise, remove from~$\calK$ any element~$i$ for which~$\lambda_i = 0$. Note that this is equivalent to placing a zero-mean uncorrelated prior on~$\bxb$ in~\eqref{E:GFT}. Hence, this model can also be interpreted as the generative model for a \emph{wide-sense stationary} random process on~$\graph$~\cite{Girault15s, Antonio16st, Perraudin16s}. The noise prior is taken as a zero-mean circular distribution with covariance matrix~$\bLambda_w = \diag(\lambda_{w,i})$, $\lambda_{w,i} \in \setR_{+}$ and~$\lambda_{w,i} > 0$.

We consider estimates of~$\bz$ of the form
\begin{equation}\label{E:zhat}
	\bzh(\calS) = \bL(\calS) \by_\calS
		\text{,}
\end{equation}
for some~$\bL(\calS) \in \setC^{n \times \abs{\calS}}$. Because~$\bL$ recovers~(approximates) $\bz$ from the samples~$\by_\calS$, it is referred to as a \emph{linear interpolation operator}~\cite{Chen15d, Chen16s, Anis16e}. An optimal interpolation operator can be found for each~$\calS$ by minimizing the interpolation error covariance matrix as in
\begin{equation}\label{P:interpOP}
\begin{aligned}
	&\underset{\bL}{\text{minimize}} &&\bK \left[ \bzh(\calS) \right]
	\\
	&\text{subject to} &&\bzh(\calS) = \bL \by_\calS
\end{aligned}
\end{equation}
where~$\bK\left[ \bzh(\calS) \right] = \E \left[ \left( \bz - \bzh(\calS) \right) \left( \bz - \bzh(\calS) \right)^{H} \mid \bx,\bw \right]$ and the minimum is taken with respect to the partial ordering of the PSD cone~(see Remark~\ref{R:psd_minimum}). We have omitted the dependence of~$\bL$ on~$\calS$ for clarity. Our interest in solving~\eqref{P:interpOP} instead of minimizing the MSE directly is that it is more general. In particular, a solution of~\eqref{P:interpOP} also minimizes any spectral function of~$\bK$, including the MSE and the log-determinant. The following proposition gives an explicit solution to this problem. To clarify the derivations, we define the selection matrix~$\bC \in \{0,1\}^{\abs{\calS} \times N}$ composed of the identity matrix rows with indices in~$\calS$, so that the samples of~\eqref{E:y} can be written as~$\by_\calS = \bC \by$.

\begin{proposition}\label{T:optimalInterpolator}

Let~$\by = \bx + \bw$ be noisy observations of a graph signal~$\bx$. Let the priors on~$\bx$ and~$\bw$ be zero-mean circular distributions with covariances~$\bSigma  = \bV_\calK \bLambda \bV_\calK^{H}$, $\bLambda = \diag(\lambda_i)$, and~$\bLambda_w = \diag(\lambda_{w,i})$ respectively. Given a sampling set~$\calS$, the optimal Bayesian linear interpolator~$\bL^\star$ that solves problem~\eqref{P:interpOP} is obtained as a solution of
\begin{equation}\label{E:normalEq}
	\bL^\star \bC \left( \bSigma + \bLambda_w \right) \bC^{T}
		= \bH \bSigma \bC^{T}
		\text{.}
\end{equation}
The error covariance matrix of the optimal interpolation~$\bzh^\star = \bL^\star \by_\calS$ is given by
\begin{equation}\label{E:Kstar}
	\bK^\star(\calS) = \bH \bV_\calK \left( \bLambda^{-1} +
		\sum_{i \in \calS} \lambda_{w,i}^{-1} \bv_i \bv_i^{H}  \right)^{-1} \bV_\calK^{H} \bH^{H}
		\text{,}
\end{equation}
where~$\bV_\calK = \vect{ccc}{\bv_1 & \cdots & \bv_N}^H$.

\end{proposition}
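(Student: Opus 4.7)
The plan is to view this as a standard Bayesian linear estimation problem. I would start by writing the interpolation error as
$\bz - \bzh = (\bH - \bL\bC)\bx - \bL\bC\bw$, and then use that $\bx$ and $\bw$ are zero-mean, independent, and circular to kill all pseudo-covariance terms. This gives the quadratic form
\begin{equation*}
\bK(\bL) = (\bH - \bL\bC)\,\bSigma\,(\bH - \bL\bC)^H + \bL\,\bC\bLambda_w\bC^T\bL^H\text{.}
\end{equation*}
Completing the square in $\bL$ (or equivalently differentiating with respect to $\bL$ and invoking the first-order stationarity condition, which is both necessary and sufficient for a PSD-order minimum since the quadratic form is convex in $\bL$) yields immediately the normal equation \eqref{E:normalEq}. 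This also handles the subtlety flagged in Remark~\ref{R:psd_minimum}: the same $\bL^\star$ simultaneously minimizes $\bK$ in the PSD order because the gradient does not depend on the direction of comparison.

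Next I would substitute $\bL^\star$ back into $\bK(\bL)$ to obtain
\begin{equation*}
\bK^\star(\calS) = \bH\bSigma\bH^H - \bH\bSigma\bC^T\bigl[\bC(\bSigma+\bLambda_w)\bC^T\bigr]^{-1}\bC\bSigma\bH^H\text{,}
\end{equation*}
where invertibility of $\bC(\bSigma+\bLambda_w)\bC^T$ follows from $\bLambda_w \succ 0$. The main obstacle, and where some care is needed, is that $\bSigma = \bV_\calK\bLambda\bV_\calK^H$ is rank-deficient, so one cannot directly write $\bSigma^{-1}$ or apply Woodbury's identity to $\bSigma$. I would sidestep this by factoring $\bH\bSigma = \bH\bV_\calK\bLambda\bV_\calK^H$ on the left and $\bSigma\bH^H = \bV_\calK\bLambda\bV_\calK^H\bH^H$ on the right, pulling $\bH\bV_\calK$ and $\bV_\calK^H\bH^H$ outside, which leaves the inner expression
\begin{equation*}
\bLambda - \bLambda\bB^H(\bB\bLambda\bB^H + \bW)^{-1}\bB\bLambda\text{,}
\end{equation*}
with $\bB = \bC\bV_\calK$ and $\bW = \bC\bLambda_w\bC^T$ now both well-behaved and $\bLambda$, $\bW$ invertible.

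At this point the standard matrix inversion lemma applies cleanly and gives
$(\bLambda^{-1} + \bB^H\bW^{-1}\bB)^{-1}$. The final step is to recognize the dyadic sum: since $\bW = \bC\bLambda_w\bC^T$ is the diagonal submatrix $\diag(\lambda_{w,i})_{i\in\calS}$, the product $\bB^H\bW^{-1}\bB = \bV_\calK^H\bC^T(\bC\bLambda_w\bC^T)^{-1}\bC\bV_\calK$ equals $\sum_{i\in\calS}\lambda_{w,i}^{-1}\bv_i\bv_i^H$, where $\bv_i^H$ is the $i$-th row of $\bV_\calK$. Re-inserting $\bH\bV_\calK$ and $\bV_\calK^H\bH^H$ then yields \eqref{E:Kstar}. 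An alternative and perhaps more transparent route is to perform the change of variables $\bx = \bV_\calK\bxb_\calK$ from the outset, so that $\bxb_\calK$ has full-rank covariance $\bLambda$ and $\by_\calS = \bB\bxb_\calK + \bC\bw$ is a textbook linear-Gaussian-type observation of $\bxb_\calK$; then $\bK^\star$ is just the push-forward $\bH\bV_\calK(\cdot)\bV_\calK^H\bH^H$ of the standard Bayesian MMSE error covariance, bypassing Woodbury altogether.
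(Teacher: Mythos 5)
Your proof is correct and follows essentially the same route as the paper's: expand $\bK$ as a quadratic in $\bL$, derive the normal equation from the stationarity/completion-of-squares condition, substitute back, and apply the matrix inversion lemma after factoring out $\bH\bV_\calK$ and $\bV_\calK^H\bH^H$ so that only the full-rank $\bLambda$ and $\bC\bLambda_w\bC^T$ need inverting. The only differences are cosmetic: the paper obtains~\eqref{E:normalEq} by zeroing the gradient of $\bb^H\bK\bb$ simultaneously for all $\bb$ rather than completing the square, and your explicit treatment of the rank-deficiency of $\bSigma$ makes a step the paper performs implicitly.
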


\begin{proof}

Start by substituting~\eqref{E:z} and~\eqref{E:zhat} into the definition of~$\bK$ to get
\begin{equation*}
	\bK(\bL \by_\calS) = \E \left[ (\bH \bx - \bL \bC \by)
		(\bH \bx - \bL \bC \by)^{H}
			\:\middle\vert\: \bx,\bw \right]
		\text{.}
\end{equation*}
Note that we used the fact that~$\by_\calS = \bC \by$. Then, using the priors on~$\bx$ and~$\bw$, $\bK$ expands to
\begin{equation}\label{E:KLS}
\begin{aligned}
	\bK(\bL \by_\calS) &= \bH \bSigma \bH^{H}
		- \bL \bC \bSigma \bH^{H}
	- \bH \bSigma \bC^{T} \bL^{H}
	\\
	{}&+ \bL \bC (\bSigma + \bLambda_w) \bC^{T} \bL^{H}
		\text{.}
\end{aligned}
\end{equation}

From the partial ordering of the PSD cone, $\bL^\star$ can be obtained by minimizing the scalar cost function
\begin{equation}\label{E:J}
	J(\bL) = \bb^{H} \bK(\bL \by_\calS) \bb
\end{equation}
\emph{simultaneously} for all $\bb \in \setC^n$~\cite{Kailath00l}. Substituting~\eqref{E:KLS} into~\eqref{E:J} and setting its gradient with respect to $\bb^H \bL$ to zero gives
\begin{equation*}
	\frac{\del J(\bL)}{\del \bb^{H} \bL} = \bzero \Leftrightarrow
	\bC \left( \bSigma + \bLambda_w \right) \bC^{T} \bL^{H} \bb
		= \bC \bSigma \bH^{H} \bb
		\text{.}
\end{equation*}
%
Since this must hold for all~$\bb$ simultaneously, we obtain~\eqref{E:normalEq}.

To determine the error covariance matrix~$\bK^\star$ of the optimal interpolator, replace any~$\bL^\star$ satisfying~\eqref{E:normalEq} into~\eqref{E:KLS} and expand~$\bSigma = \bV_\calK \bLambda \bV_\calK^{H}$ to get
\begin{multline}\label{E:Kstar1}
	\bK^\star(\bL^\star \by_\calS) = \bH \bV_\calK \left\{
		\bLambda - \bLambda \bV_\calK^{H} \bC^{T} \times{}
	\vphantom{\sum}\right.
	\\
	\left.\vphantom{\sum}
		\left[ \bC \left( \bV_\calK \bLambda \bV_\calK^{H}
			+ \bLambda_w \right) \bC^{T} \right]^{-1}
			\bC \bV_\calK \bLambda
	\right\} \bV_\calK^{H} \bH^{H}
		\text{.}
\end{multline}
Note that~\eqref{E:Kstar1} does not depend on $\bL^\star$ or $\by_\calS$, only on the sampling set~$\calS$ through the selection matrix~$\bC$. Moreover, since $\bLambda_w$ is diagonal and full rank, $(\bC \bLambda_w \bC^{T})^{-1} = \bC \bLambda_w^{-1} \bC^{T}$, so that the inverse in~\eqref{E:Kstar1} always exists. Therefore, using the matrix inversion lemma~\cite{Horn13} gives
\begin{equation*}
	\bK^\star(\calS) = \bH \bV_\calK \left( \bLambda^{-1} +
		\bV_\calK^{H} \bC^{T} \bC \bLambda_w^{-1} \bC^T \bC \bV_\calK
		\right)^{-1} \bV_\calK^{H} \bH^{H}
		\text{.}
\end{equation*}
Given that~$\bC^{T} \bC$ is a diagonal matrix with ones on the indices in~$\calS$ and zeros everywhere else, we obtain~\eqref{E:Kstar} by noting that
\begin{equation*}
	\bV_\calK^{H} \bC^{T} \bC \bLambda_w^{-1} \bC^{T} \bC \bV_\calK =
	\sum_{i \in \calS} \lambda_{w,i}^{-1} \bv_i \bv_i^{H}
		\text{,}
\end{equation*}
for~$\bV_\calK = \vect{ccc}{\bv_1 & \cdots & \bv_N}^H$. \qedhere

\end{proof}

%
%

Given prior distributions for the graph signal and noise, Proposition~\ref{T:optimalInterpolator} determines the optimal linear interpolator from the samples in~$\calS$. If the priors on~$\bx$ and~$\bw$ are moreover Gaussian, then~$\bzh^\star = \bL^\star \by_\calS$ is also the maximum likelihood estimate of~$\bz$~\cite{Kailath00l}. An important consequence of the Bayesian statement of Proposition~\ref{T:optimalInterpolator} is that~$\bLambda$ and~$\bLambda_w$ are taken from prior distributions on the signal and noise. Thus, their actual values need not be known exactly, as illustrated in Section~\ref{S:kPCA}. Note that the optimal error covariance matrix~$\bK^\star$ now depends only on the sampling set~$\calS$, since it measures the error of the optimal estimator~$\bL^\star$. Moreover, although we assume that the interpolation is performed as a single step projection, iterative procedures can also be used~\cite{Lorenzo16a, Wang15l}.

Despite our assumption that~$\bLambda_w$ is full-rank, \eqref{E:normalEq} also holds in the noiseless case~($\bLambda_w = \bzero$). Its solution, however, may no longer be unique. In particular, this happens if the sampling set is not sufficient to determine~$\bz$, i.e., if~$\bC \bV_\calK$ is rank-deficient~\cite{Shomorony14s, Chen15d, Anis16e, Tsitsvero16s}. In contrast, when~$\bLambda_w \succ \bzero$, the matrix on the left-hand side of~\eqref{E:normalEq} is always invertible and~$\bL^\star$ is unique for each~$\calS$. This is similar to the well-known regularization effect of noise in Kalman filtering~\cite{Kailath00l}. The interpolation performance given in~\eqref{E:Kstar}, however, is not the same for all sampling sets.

\begin{remark}\label{R:psd_minimum}

Problem~\eqref{P:interpOP} is a PSD matrix minimization problem that searches for the optimal interpolator~$\bL^\star$ that minimizes the error covariance matrix~$\bK$. In general, optimization problems in the PSD cone need not have a solution. Since the ordering of PSD matrices is only partial, the existence of a matrix that is smaller than all other matrices is not guaranteed~\cite{Boyd04c}. As shown in Proposition~\ref{T:optimalInterpolator}, this is not the case here. Problem~\eqref{P:interpOP} admits a dominant solution~$\bL^\star$ in the PSD cone, i.e., it holds that~$\bK(\bL^\star \by_\calS) \preceq \bK(\bL \by_\calS)$ for all~$\bL \in \setC^{n \times \abs{\calS}}$. This means that~$\bL^\star$ minimizes all the eigenvalues of~$\bK$ simultaneously. Equivalently, it implies that~$\bL^\star$ is a solution to the minimization of any spectral function of~$\bK$. In particular, it follows that $\bL^\star$ minimizes the MSE, since~$\MSE(\bzh) := \E \norm{\bz - \bzh}_2^2 = \trace \left[ \bK(\bzh) \right]$, and the~$\log\det\left[ \bK(\bzh) \right]$.

\end{remark}

\subsection{Sampling set selection}
	\label{S:SSS}

Proposition~\ref{T:optimalInterpolator} allows us to evaluate the optimal interpolator~$\bL^\star$ that minimizes the estimation error covariance matrix for a given sampling set. This does not guarantee, however, that there is no other sampling set of the same size for which the interpolation error is smaller. To address this issue, we investigate the \emph{sampling set selection} problem which sets out to find the sampling set that minimizes the interpolation MSE over all sampling sets. Explicitly, we wish to solve
\begin{equation}\label{P:SSS}
\begin{aligned}
	&\underset{\calS \subseteq \calV}{\text{minimize}} && \MSE(\calS)
	\\
	&\text{subject to} &&\abs{\calS} \leq k
\end{aligned}
\end{equation}
where $\MSE(\calS) = \trace[\bK^\star(\calS)]$.

An important fact about~\eqref{P:SSS} is that increasing~$\calS$ always decreases MSE. This has two important consequences. First, the unconstrained version of~\eqref{P:SSS} is trivial, i.e., its solution is~$\calS = \calV$. Second, it implies that the constraint in~\eqref{P:SSS} is tight, i.e., it can be replaced by the equality constraint~$\abs{\calS} = k$ without changing the problem solution. This property is a direct corollary of the following lemma and the monotonicity of the trace operator~\cite{Bhatia97m}:

\begin{lemma}\label{T:monotonicity}

The matrix-valued set function $\bK^\star(\calS)$ in~\eqref{E:Kstar} is monotonically decreasing with respect to the PSD cone, i.e., $\bK^\star(\calA) \succeq \bK^\star(\calB)$ whenever~$\calA \subseteq \calB \subseteq \calV$.

\proof

Start by noting that $\bK^\star$ in~\eqref{E:Kstar} can be written as
\begin{equation*}
	\bK^\star(\calS) = \bH \bV_\calK \bKb(\calS) \bV_\calK^{H} \bH^{H}
		\text{,}
\end{equation*}
with $\bKb(\calS) = \left[ \bLambda^{-1} + \bR(\calS) \right]^{-1}$ and
\begin{equation*}
	\bR(\calS) = \sum_{i \in \calS} \lambda_{w,i}^{-1} \bv_i \bv_i^{H}
		\text{.}
\end{equation*}
Since $\bK^\star$ and $\bKb$ are congruent, it suffices to show that $\bKb$ is a monotonically decreasing set function~\cite{Horn13}.

To do so, note that $\bKb$ only depends on $\calS$ through $\bR(\calS)$ and that $\bR$ is additive, i.e., $\bR(\calA \cup \calB) = \bR(\calA) + \bR(\calB)$. Then, since $\lambda_{w,i} > 0$, $\bR$ is a sum of PSD matrices, which implies that~$\calA \subseteq \calB \Rightarrow \bR(\calA) \preceq \bR(\calB)$, i.e., $\bR$ is monotonically increasing. From the antitonicity of the matrix inverse~\cite{Bhatia97m}, it follows that~$\bKb$ is monotonically decreasing.
\qed

\end{lemma}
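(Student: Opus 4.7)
The plan is to exploit the explicit form of $\bK^\star(\calS)$ given in~\eqref{E:Kstar} and reduce the set-function monotonicity to two standard matrix-analytic facts: antitonicity of inversion on the positive-definite cone, and preservation of the Loewner order under congruence. The key observation is that the dependence on $\calS$ is confined to a sum of rank-one PSD terms sitting inside a single matrix inverse, so there is essentially nothing to do beyond tracking how the PSD inequality propagates through those operations.

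First I would factor out everything that does not depend on $\calS$. Writing $\bK^\star(\calS) = (\bH \bV_\calK)\, \bKb(\calS)\, (\bH \bV_\calK)^{H}$ with $\bKb(\calS) = \left( \bLambda^{-1} + \bR(\calS) \right)^{-1}$ and $\bR(\calS) = \sum_{i \in \calS} \lambda_{w,i}^{-1} \bv_i \bv_i^{H}$, the target inequality $\bK^\star(\calA) \succeq \bK^\star(\calB)$ follows from $\bKb(\calA) \succeq \bKb(\calB)$ by a single congruence step, since $\bX \preceq \bY$ implies $\bM \bX \bM^{H} \preceq \bM \bY \bM^{H}$ for any conformal $\bM$. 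This isolates the real content of the lemma to the behavior of $\bKb$.

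Second I would verify monotonicity of $\bR$: for $\calA \subseteq \calB$,
\[
\bR(\calB) - \bR(\calA) = \sum_{i \in \calB \setminus \calA} \lambda_{w,i}^{-1} \bv_i \bv_i^{H} \succeq \bzero ,
\]
because $\lambda_{w,i} > 0$ and each outer product is PSD. Adding $\bLambda^{-1} \succ \bzero$ to both sides preserves the inequality and, more importantly, makes both matrices strictly positive definite, so their inverses are well defined. Applying the antitonicity of the matrix inverse on the PD cone ($\bzero \prec \bX \preceq \bY \Rightarrow \bY^{-1} \preceq \bX^{-1}$) then yields $\bKb(\calA) \succeq \bKb(\calB)$, and the congruence argument above lifts this to $\bK^\star$.

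I do not expect a real obstacle here: every step is a one-line invocation of a classical matrix inequality, and the form of $\bK^\star$ in~\eqref{E:Kstar} was arranged precisely to make this transparent. The only thing requiring a sentence of justification is invertibility of $\bLambda^{-1} + \bR(\calS)$, which is immediate from the standing assumption that $\bLambda$ is full rank (so $\bLambda^{-1} \succ \bzero$), and from $\bR(\calS) \succeq \bzero$.
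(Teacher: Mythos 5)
Your proposal is correct and follows essentially the same route as the paper: factor out the congruence by $\bH\bV_\calK$, observe that $\bR$ is a monotone (additive) sum of PSD rank-one terms, and apply antitonicity of the inverse on the positive-definite cone. Your version is slightly more explicit about the difference $\bR(\calB)-\bR(\calA)$ and the invertibility of $\bLambda^{-1}+\bR(\calS)$, but there is no substantive difference.
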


Although Lemma~\ref{T:monotonicity} reduces the searching space to sampling sets of size~$k$, \eqref{P:SSS} remains a combinatorial optimization problem: $\binom{n}{k}$ sampling sets must still be checked, which is impractical even for moderately small~$n$. In fact, due to the irregularity of the domain of graph signals, sampling set selection is NP-hard in general. It is straightforward to see that it is equivalent to the sensor placement or forward regression problems in~\cite{Krause08n, Das11s, Sagnol13a, Ranieri14n}, so that the typical reduction from set cover applies~\cite{Natarajan95s}.

In the following sections, we address this issue in two ways. First, we derive universal performance bounds that hold for all sampling sets~(Section~\ref{S:Bounds}). These bounds can therefore be used to evaluate the quality of a sampling set or selection heuristic \emph{a posteriori}. Second, we study the greedy sampling algorithm and provide near-optimal \emph{a priori} guarantees based on the concept of \emph{approximate submodularity}~(Section~\ref{S:greedySampling}). Special cases of these results that considered real-valued homeoscedastic signal and noise~($\bLambda = \sigma_x^2 \bI$ and $\bLambda_w = \sigma_w^2 \bI$) and no transformation of the graph signal~($\bH = \bI$) can be found in~\cite{Chamon16u, Chamon16n}.

\section{Universal Bounds on the Interpolation MSE}
	\label{S:Bounds}

In this section, we derive interpolation performance bounds that hold for all~$\calS$. These universal bounds can be used to inform the sampling set selection by (i)~describing how different factors influence the reconstruction performance and (ii)~gauging the quality of sampling set instances. The main result of this section is presented below.

\begin{theorem}\label{T:mseBound}

Let $\bx$ be a $\calK$-ssparse stationary graph signal and $\by = \bx + \bw$ be its noisy observations. Take~$\bzh^\star = \bL^\star \by_\calS$ to be the minimum MSE linear interpolation of~$\bz = \bH \bx$ based on a sampling set~$\calS$ given zero-mean circular priors on the signal and noise such that~$\E \bx\bx^H = \bV_\calK \bLambda \bV_\calK^H$ and~$\E \bw \bw^H = \bLambda_w$. For~$\bW = \bV_\calK^H \bH^H \bH \bV_\calK \succ 0$, the reconstruction error $\MSE(\calS) = \E \norm{\bz - \bzh^\star}^2 = \trace[\bK^\star(\calS)]$ is bounded by
	\begin{equation}\label{E:mseBound}
		\frac{\abs{\calK}^2}{\trace \left[ (\bW \bLambda)^{-1} \right] + \levb{\abs{\calS}}}
			\leq
		\MSE(\calS)
			\leq
		\trace \left( \bW \bLambda \right)
			\text{,}
	\end{equation}
	where~$\levb{m}$ is the sum of the $m$~largest weighted structural SNRs~$\ell_i = \lambda_{w,i}^{-1} \norm{\bv_i}_{\bW^{-1}}^2$, with~$\bV_\calK = \vect{ccc}{\bv_1 & \cdots & \bv_N}^H$ and~$\norm{\bx}_{\bA}^2 = \bx^{H} \bA \bx$. Explicitly, $\levb{m} = \max_{\calX:\abs{\calX} = m} \sum_{j \in \calX} \ell_j$.

\end{theorem}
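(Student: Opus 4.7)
My plan is to recast $\MSE(\calS)$ in a form that depends on a single PSD matrix of the ``intrinsic'' size $|\calK|$, and then attack the two bounds separately: the upper bound via monotonicity (Lemma~\ref{T:monotonicity}), and the lower bound via the Cauchy--Schwarz/AM--HM trace inequality $\trace(\bM)\trace(\bM^{-1})\geq r^2$ that holds for every $r\times r$ PD matrix $\bM$.

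First I would rewrite $\MSE(\calS)$. Using Proposition~\ref{T:optimalInterpolator} and the cyclic property of the trace,
\begin{equation*}
    \MSE(\calS) = \trace\!\bigl[\bH\bV_\calK \bKb(\calS)\bV_\calK^H\bH^H\bigr] = \trace\!\bigl[\bW\,\bKb(\calS)\bigr],
\end{equation*}
with $\bKb(\calS) = (\bLambda^{-1} + \bR(\calS))^{-1}$ and $\bR(\calS) = \sum_{i\in\calS}\lambda_{w,i}^{-1}\bv_i\bv_i^H$. The whole problem is now about a $|\calK|\times|\calK|$ matrix. The upper bound is then immediate: take $\calS = \{\}$, for which $\bR = \bzero$ and $\bKb = \bLambda$, so $\MSE(\{\}) = \trace(\bW\bLambda)$; by Lemma~\ref{T:monotonicity} any larger $\calS$ can only decrease the error covariance in the PSD order, and monotonicity of the trace gives the right-hand inequality in~\eqref{E:mseBound}.

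For the lower bound, since $\bW\succ 0$ by hypothesis, factor $\bW = \bW^{1/2}\bW^{1/2}$ and define the PD matrix $\bM = \bW^{1/2}\bKb(\calS)\bW^{1/2}\in\setS_+^{|\calK|\times|\calK|}$. Then $\trace(\bW\bKb) = \trace(\bM)$, and the standard trace inequality (a direct consequence of Cauchy--Schwarz applied to the eigenvalues) yields
\begin{equation*}
    \trace(\bM)\,\trace(\bM^{-1}) \geq |\calK|^2 \quad\Longrightarrow\quad \MSE(\calS) \geq \frac{|\calK|^2}{\trace(\bM^{-1})}.
\end{equation*}
It remains to expand $\trace(\bM^{-1}) = \trace(\bW^{-1}\bKb(\calS)^{-1}) = \trace(\bW^{-1}\bLambda^{-1}) + \trace(\bW^{-1}\bR(\calS))$. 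The first term equals $\trace[(\bW\bLambda)^{-1}]$. For the second, interchange trace and sum and use $\trace(\bW^{-1}\bv_i\bv_i^H) = \bv_i^H\bW^{-1}\bv_i = \|\bv_i\|_{\bW^{-1}}^2$, giving
\begin{equation*}
    \trace(\bW^{-1}\bR(\calS)) = \sum_{i\in\calS}\lambda_{w,i}^{-1}\|\bv_i\|_{\bW^{-1}}^2 = \sum_{i\in\calS}\ell_i.
\end{equation*}
Finally, bound this sum by the sum of the $|\calS|$ largest $\ell_i$, which by definition is $\levb{|\calS|}$. Substituting completes the lower bound.

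The only delicate point is verifying that every inverse is well defined: $\bLambda^{-1}\succ 0$ because $\bLambda$ is full-rank by assumption, so $\bKb(\calS)\succ 0$ for all $\calS$; combined with $\bW\succ 0$ this makes $\bM$ positive definite and the trace inequality applicable. The estimate $\sum_{i\in\calS}\ell_i\leq\levb{|\calS|}$ is loose only when $\calS$ is not chosen to maximize the structural SNR, which is exactly what makes the bound \emph{universal} over all sampling sets, as claimed.
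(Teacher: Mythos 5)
Your proof is correct and follows essentially the same route as the paper's: the upper bound via Lemma~\ref{T:monotonicity} evaluated at the empty set, and the lower bound via the cyclic rewriting $\MSE(\calS)=\trace[\bW\bKb(\calS)]$ followed by the trace inequality $\trace(\bM)\trace(\bM^{-1})\geq r^2$ and the term-by-term expansion of $\trace(\bW^{-1}\bKb(\calS)^{-1})$ into $\trace[(\bW\bLambda)^{-1}]+\sum_{i\in\calS}\ell_i\leq\trace[(\bW\bLambda)^{-1}]+\levb{\abs{\calS}}$. The only cosmetic difference is that you symmetrize to $\bM=\bW^{1/2}\bKb(\calS)\bW^{1/2}$ before invoking the inequality, whereas the paper applies it directly to the (non-Hermitian but positive-spectrum) product $\bW\bKb(\calS)$; your variant is slightly cleaner to justify but yields the identical bound.
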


\begin{proof}
	Start with the upper bound that is achieved for an empty sampling set, i.e., for $\calS = \{\}$. Indeed, recall from Lemma~\ref{T:monotonicity} that $\bK^\star$ is a monotone decreasing set function, i.e., it achieves its maximum for the empty set. Thus, it holds that~$\bK(\bxh^\star) \preceq \bH \bV_\calK \bLambda \bV_\calK^{H} \bH^{H}$, from which the upper bound in~\eqref{E:mseBound} follows by the monotonicity of the trace operator~\cite{Bhatia97m}.

	To obtain the lower bound, start by using~\eqref{E:Kstar} to get
	\begin{equation}\label{E:mseBound1}
		\MSE(\calS) = \trace\left[
			\bW \left( \bLambda^{-1} + \sum_{i \in \calS} \lambda_{w,i}^{-1} \bv_i \bv_i^{H} \right)^{-1}
		\right]
			\text{,}
	\end{equation}
	where we used the circular commutation property of the trace. Then, since the trace of a matrix is the sum of its eigenvalues, the arithmetic/harmonic means inequality can be used to get, for any~$N \times N$ positive-definite matrix~$\bX$,
	\begin{equation*}
		\trace\left( \bX \right) \geq \frac{N^2}{\trace\left( \bX^{-1} \right)}
			\text{,}
	\end{equation*}
	with equality if and only if $\bX = \gamma \bI$, $\gamma > 0$~\cite{Horn13}. Since $\bW \succ 0$, the matrix in~\eqref{E:mseBound1} is positive-definite and we have
	\begin{equation*}
		\MSE(\calS) \geq
			\frac{
				\abs{\calK}^2
			}{
				\trace \left[ (\bW \bLambda)^{-1} \right] +
				\trace \left[ \bW^{-1} \left( \sum_{i \in \calS} \lambda_{w,i}^{-1} \bv_i \bv_i^{H} \right) \right]
			}
			\text{,}
	\end{equation*}
	which from the commutation property of the trace gives
	\begin{equation*}
		\MSE(\calS) \geq
			\frac{\abs{\calK}^2}{\trace \left[ (\bW \bLambda)^{-1} \right] +
				\sum_{i \in \calS} \lambda_{w,i}^{-1} \norm{\bv_i}_{\bW^{-1}}^2}
			\text{,}
	\end{equation*}
	where~$\norm{\bx}_{\bA}^2 = \bx^{H} \bA \bx$ is a weighted norm. Finally, replacing the sum in the denominator by its maximum value~$\levb{\abs{\calS}}$ gives the desired lower bound in~\eqref{E:mseBound}.\qedhere

\end{proof}

\begin{figure}[t]
	\centering
	\includesvg{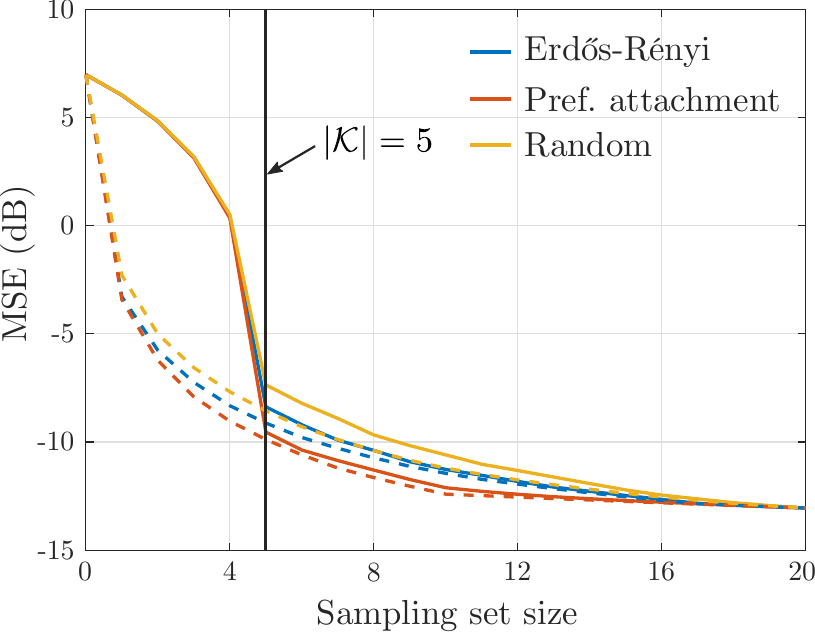}
\caption{Comparison between~\eqref{E:mseBound}~(dashed lines) and minimum MSE~(solid lines) for reconstructing graph signals~($\bH = \bI$) on random graphs~($n = 20$)}
	\label{F:MSEvsBound1}
\end{figure}

The bounds in~\eqref{E:mseBound} were derived from a Bayesian perspective, so that the expectation in the MSE is taken over realizations of the signal and noise and the bounds hold for all sampling sets~$\calS \subseteq \calV$. Also, it is worth noting that~\eqref{E:mseBound} depends only on statistics of the graph signal~($\bLambda$, $\bLambda_w$, and~$\calK$), the transform~($\bH$), the structural properties of the underlying graph~($\bV$), and the sampling set size~($\abs{\calS}$). These are all quantities known \emph{a priori}, i.e., before the sampling occurs.

As expected, \eqref{E:mseBound} decreases with the sampling set size. The rate of decay, however, depends on the \emph{weighted structural SNRs}~$\{\ell_i\}$. These quantities represent the relation between the signal of interest and the noise at each node, taking into account the structure of the graph and the subspace of interest~[$\colspan(\bV_\calK)$]. Moreover, they are related to statistical estimates such as the leverage score and the Mahalanobis distance in regression. In a sequential sampling scheme, their value can be used to inform whether a new sample is worth acquiring by bounding the MSE improvement. A bound on the decay rate can also be obtained using the fact that~$\levb{m} \leq m \ell_{\text{max}}$ for $\ell_{\text{max}} = \max_i \ell_i$. Then, if the sampling set is chosen so as to uniquely determine the graph signal, i.e., $\abs{\calS} \geq \abs{\calK}$, \eqref{E:mseBound} reduces to
\begin{equation}\label{E:kBound}
	\MSE(\calS) \geq
		\frac{\abs{\calK}}{\lambda_\text{min} (\bW \bLambda)^{-1} + \ell_{\text{max}}}
		\text{,}
\end{equation}
where we used the fact that~$\trace \left[ (\bW \bLambda)^{-1} \right] \leq \abs{\calK} \lambda_\text{min} (\bW \bLambda)^{-1}$. It is clear from~\eqref{E:kBound} that the reconstruction error increases linearly with the bandwidth of the graph signal, which is a fundamental limitation for large dimensional signals. It also shows the importance of working with low bandwidth signals and, consequently, of appropriately identifying the signal's underlying graph.

Although these observations give insights into graph signal interpolation, one of the main motivation behind Theorem~\ref{T:mseBound} is addressing the issue of sampling set selection. Towards this end, we propose the following corollary:

\begin{corollary}\label{T:sBound}
	For any graph signal and its interpolation as in Theorem~\ref{T:mseBound}, all sampling set~$\calS$ for which~$\MSE(\calS) \leq \eta$ satisfy
	\begin{equation}\label{E:sBound}
		\levb{\abs{\calS}}
		\geq
		\frac{\abs{\calK}^2 - \eta \trace \left[ (\bW \bLambda)^{-1} \right]}{\eta}
			\text{.}
	\end{equation}
	Since~$\levb{\abs{\calS}} \leq \abs{\calS} \ell_{\text{max}}$, it also holds that
	\begin{equation}\label{E:sBoundSimple}
		\abs{\calS}
		\geq
		\frac{\abs{\calK}^2 - \eta \trace \left[ (\bW \bLambda)^{-1} \right]}{\eta \ell_{\text{max}}}
			\text{.}
	\end{equation}
\end{corollary}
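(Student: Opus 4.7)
The plan is to invoke the lower bound of Theorem~\ref{T:mseBound} and manipulate it algebraically; no new argument is needed beyond what has already been established. Specifically, by hypothesis $\MSE(\calS) \leq \eta$, while Theorem~\ref{T:mseBound} asserts
\begin{equation*}
    \MSE(\calS) \geq \frac{\abs{\calK}^2}{\trace\left[(\bW\bLambda)^{-1}\right] + \levb{\abs{\calS}}}.
\end{equation*}
Chaining these two inequalities yields $\eta \bigl(\trace[(\bW\bLambda)^{-1}] + \levb{\abs{\calS}}\bigr) \geq \abs{\calK}^2$. Since $\eta > 0$ (the MSE is non-negative, and $\eta = 0$ is an unattainable target because $\bLambda_w \succ 0$), I would divide by $\eta$ and isolate $\levb{\abs{\calS}}$ to obtain~\eqref{E:sBound} directly.

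For the simplified form~\eqref{E:sBoundSimple}, I would reuse the elementary majorization already invoked in the body of the paper just before~\eqref{E:kBound}: since $\levb{m}$ is the sum of the $m$ largest weighted structural SNRs and each such SNR is at most $\ell_{\text{max}}$, we have $\levb{\abs{\calS}} \leq \abs{\calS}\,\ell_{\text{max}}$. Substituting this bound into~\eqref{E:sBound} and dividing by $\ell_{\text{max}} > 0$ yields~\eqref{E:sBoundSimple}.

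There is essentially no obstacle; the only subtlety worth flagging is the sign of the numerator. When $\eta \geq \abs{\calK}^2 / \trace[(\bW\bLambda)^{-1}]$, the right-hand sides of~\eqref{E:sBound} and~\eqref{E:sBoundSimple} are non-positive and the inequalities are trivially satisfied by every sampling set, consistent with the fact that in this regime even the empty set ($\MSE = \trace(\bW\bLambda)$, with $\levb{0} = 0$) may already meet the target $\eta$. The result is therefore informative precisely in the regime $\eta < \abs{\calK}^2/\trace[(\bW\bLambda)^{-1}]$, where it provides a genuine \emph{a priori} lower bound on the number of samples required to achieve a prescribed MSE level.
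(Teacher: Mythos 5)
Your proof is correct and is exactly the argument the paper intends: the corollary is stated without a separate proof precisely because it follows by chaining $\MSE(\calS)\leq\eta$ with the lower bound of Theorem~\ref{T:mseBound}, rearranging, and then applying $\levb{\abs{\calS}}\leq\abs{\calS}\,\ell_{\text{max}}$. Your added remark on the sign of the numerator also matches the paper's own observation that the bound holds trivially when $\eta$ is not achievable.
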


\begin{figure}[t]
	\centering
	\includesvg{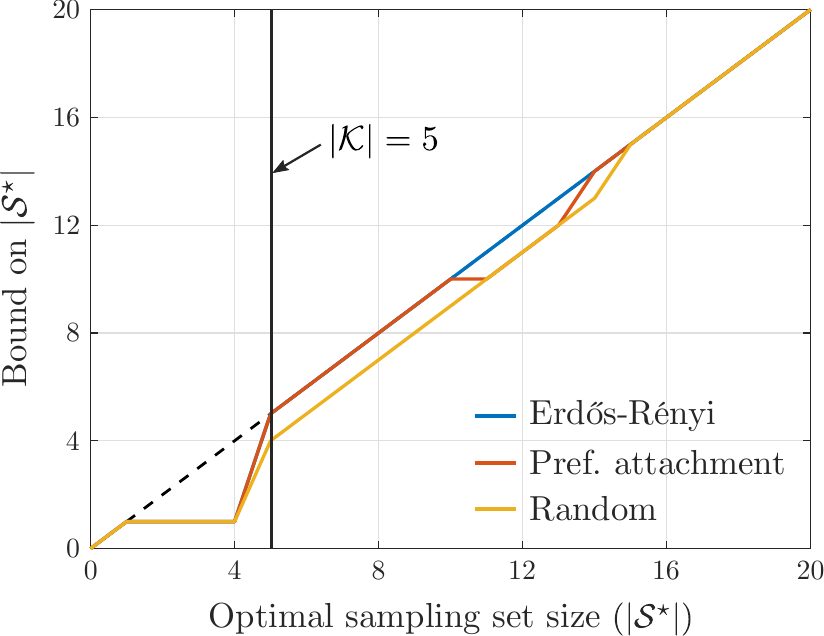}
\caption{Comparison between~\eqref{E:sBound}~(dashed lines) and optimal sampling set size~(solid lines) for reconstructing graph signals~($\bH = \bI$) on random graphs~($n = 20$)}
	\label{F:SvsBound1}
\end{figure}

Corollary~\ref{T:sBound} gives a lower bound on the number of samples needed to achieve a desired MSE. It is worth noting that this bound does not inform whether the specific MSE value~$\eta$ is achievable with less than~$n$ samples. Whenever it is not, Corollary~\ref{T:sBound} holds trivially. From~\eqref{E:sBoundSimple}, note that the minimum number of samples increases as the MSE decreases. Moreover, although~\eqref{E:sBoundSimple} suggest that the sample set size required to achieve a certain MSE grows with $\calO(\abs{\calK}^2)$, it is not necessarily the case. Indeed, recall that $\ell_{\text{max}}$ is a function of $\abs{\calK}$~(through $\norm{\bv_i}$ and~$\bV_\calK$). Still, as in the noiseless case, the signal bandwidth is a dominating factor in the determination of the minimum sampling set size.

Although~\eqref{E:sBoundSimple} characterizes the overall behavior of the sampling set size, it is not informative in practice because it largely underestimates~$\abs{\calS}$. On the other hand, \eqref{E:sBound} yields a tighter bound which can be used, together with~\eqref{E:mseBound}, to evaluate a sampling set or sampling technique for direct reconstruction of a graph signal~($\bH = \bI$). Indeed, Figures~\ref{F:MSEvsBound1} and~\ref{F:SvsBound1} compares~\eqref{E:mseBound} and~\eqref{E:sBound} to the minimum interpolation MSE and optimal set size, found by exhaustive search, for three graph models~($n = 20$): Erd\H{o}s-Rényi, preferential attachment, and a random undirected graph with weights uniformly distributed in~$[0,1]$~(see details of these models in Section~\ref{S:Simulations}). The graph signal is assumed to be homeoscedastic with~$\bLambda = \bI$ and~$\bLambda_w = \sigma_w^2 \bI$, $\sigma_w^2 = 10^{-2}$. Note that the bounds are conservative for $\abs{\calS} < \abs{\calK}$, but become tighter as~$\abs{\calS}$ increases. This is because the inequality used to derive~\eqref{E:mseBound} becomes tighter as the eigenvalues of~$\bK^\star$ become more similar.

When~$\bH$ is arbitrary and variance of signal and noise can vary across nodes, the eigenvalues of~$\bK^\star$ can become different from each another and deteriorate the bound in~\eqref{E:mseBound}. This is illustrated in Figures~\ref{F:MSEvsBound2} and~\ref{F:SvsBound2}, where~$\bH$ was taken as a~$30 \times 20$ matrix whose entries are zero-mean unit variance Gaussian random variables, $\bLambda = \bI$, and the noise variance was uniformly distributed in~$[10^{-3}, 10^{-1}]$.

\begin{remark}

Bounds on the interpolation MSE of graph sampling techniques have also been derived in~\cite{Chen15d, Chen16s}. These works consider randomized sampling set selection schemes, including uniform and leverage score sampling, and derive performance bounds on the optimal sampling distributions and interpolation error. The bounds in Theorem~\ref{T:mseBound} and Corollary~\ref{T:sBound} differ from those in~\cite{Chen15d, Chen16s} in that the latter take the spectrum of the graph signal to be deterministic and the sampling to be random. Thus, these bounds hold in expectation over different sampling realizations for a specific randomized strategy. The bounds in~\eqref{E:mseBound} hold in expectation over realizations of the signal and noise and give a worst-case performance bound for any---possibly randomized---sampling strategy.

\end{remark}

\begin{figure}[t]
	\centering
	\includesvg{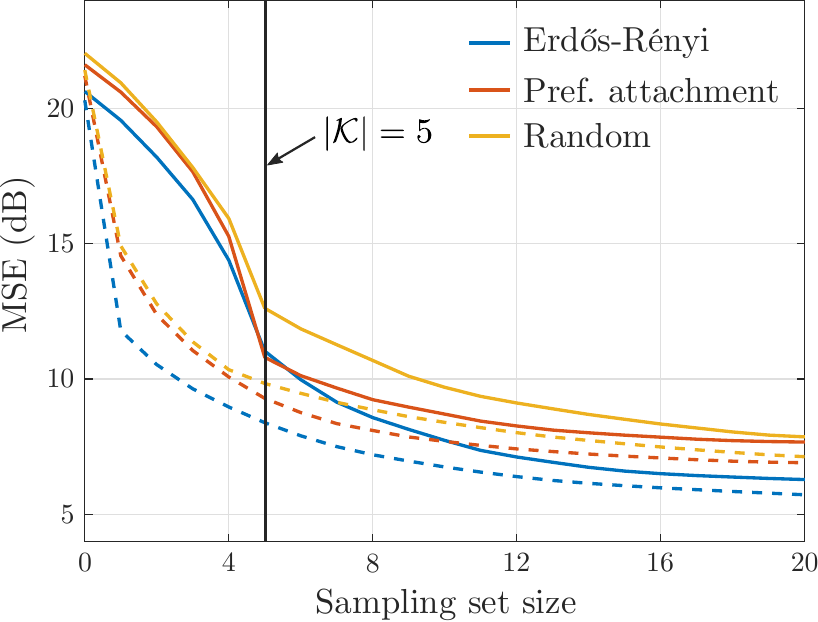}
\caption{Comparison between~\eqref{E:mseBound}~(dashed lines) and minimum MSE~(solid lines) for a random~$\bH$~($n = 20$)}
	\label{F:MSEvsBound2}
\end{figure}

\begin{figure}[t]
	\centering
	\includesvg{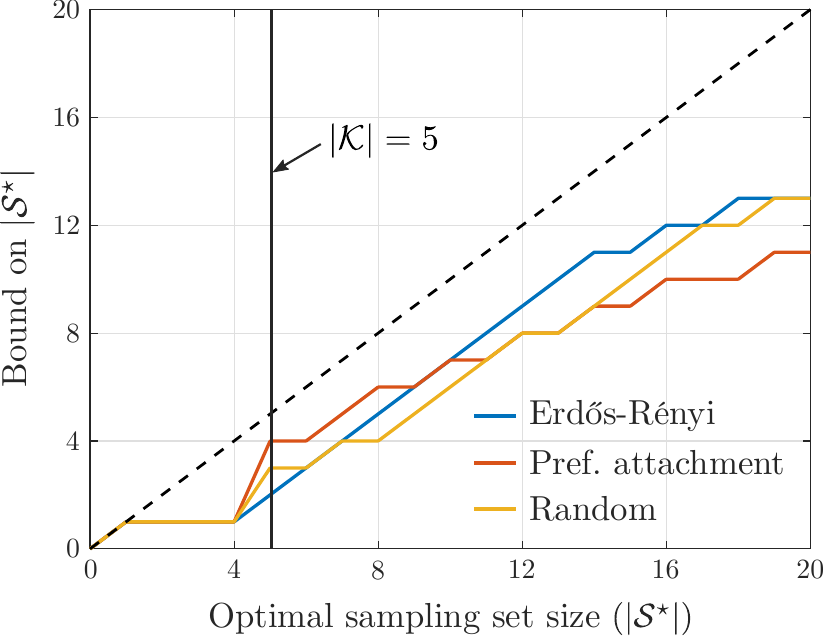}
\caption{Comparison between~\eqref{E:sBound}~(dashed lines) and optimal sampling set size~(solid lines) for a random~$\bH$~($n = 20$)}
	\label{F:SvsBound2}
\end{figure}

\section{Near-Optimal Sampling Set Selection}
	\label{S:greedySampling}

Although the bounds from Section~\ref{S:Bounds} can be used to evaluate specific sampling set instances, they do not provide performance guarantees for any sampling strategy. To do that, this section studies a specific sampling scheme, namely \emph{greedy sampling set selection}, and derives near-optimality results that hold for all problem instances.

Greedy sampling set selection is pervasive in GSP and has proven successful in many applications~\cite{Chen15d, Anis16e, Shomorony14s, Thanou14l, Tsitsvero16s, Chepuri16s}. This is illustrated in Figure~\ref{F:GreedyvsBound} which uses the bounds derived in~\eqref{E:sBound} to assess the quality of sampling sets obtained by greedily minimizing the MSE~(see Algorithm~\ref{L:greedyMSE}) on larger instances~($n = 1000$) of the three random graph models found in Figures~\ref{F:MSEvsBound1} to~\ref{F:SvsBound2}. Note that the final greedy sampling set size remains within 10\% of the lower bound in these realizations.

\begin{figure}[tb]
	\centering
	\includesvg{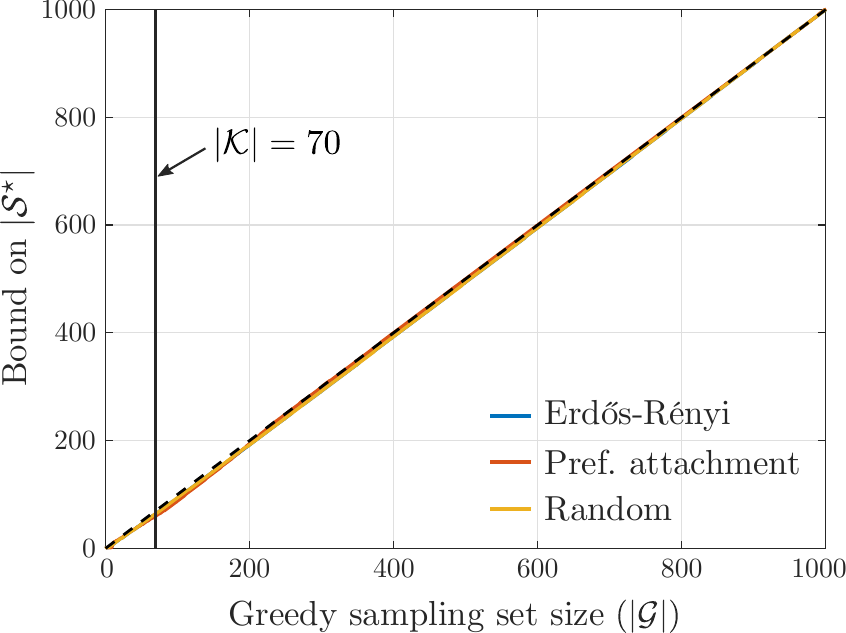}
\caption{Evaluating sampling sets obtained by greedy sampling~(solid lines) and~\eqref{E:sBound}~(dashed lines) for different random graphs~($n = 1000$)}
	\label{F:GreedyvsBound}
\end{figure}

Despite strong empirical evidences, typical performance guarantees for greedy search do not hold for greedy sampling set selection. Indeed, the well-established result from~\cite{Nemhauser78a} states that greedy minimization~(Algorithm~\ref{L:greedy}) yields guaranteed near-optimal results for monotonically decreasing and supermodular set functions. The MSE, however, is not supermodular in general. This can be seen from~\cite[Thm.~2.4]{Sagnol13a} and the fact that $f(t) = t^{-2}$ is not operator antitone~\cite{Bhatia97m}. Thus, although greedily minimizing the MSE appears to work in practice, there has yet to be a theoretical justification for it. The following sections bridge this gap by expanding the notion of approximate supermodularity introduced in~\cite{Chamon16n} and updating the performance bound from~\cite{Nemhauser78a} for this class of functions. This novel framework then allows near-optimality bounds to be derived for the MSE.

\begin{algorithm}[t]
\centering
\caption{Greedy minimization}
	\label{L:greedy}
	\setlength{\baselineskip}{1.25\baselineskip}
	\begin{algorithmic}
		\State $\calG_0 = \{\}$

		\For{$j = 1,\dots,\ell$}

			\State $\displaystyle
						u = \argmin_{s \in \calV \setminus \calG_{j-1}} f\left( \calG_{j-1} \cup \{s\} \right)$

			\State $\displaystyle
						\calG_j = \calG_{j-1} \cup \{u\}$
		\EndFor
	\end{algorithmic}
\end{algorithm}

\subsection{Approximate supermodularity and greedy minimization}
	\label{S:Supermodularity}

\emph{Supermodularity}~(and its dual \emph{submodularity}) encodes the ``diminishing returns'' property of certain functions that leads to bounds on the suboptimality of their greedy minimization~\cite{Nemhauser78a}. Well-known supermodular functions include the rank, $\log\det$, or Von-Neumann entropy of a matrix~\cite{Bach14l}. Still, supermodularity is a stringent condition. In particular, it does not hold for the MSE in~\eqref{P:SSS}. To provide suboptimality bounds for its greedy minimization, we therefore define the concept of approximate supermodularity.

A set function $f: 2^\calV \to \setR$ is~\emph{$\alpha$-supermodular} if for all sets $\calA \subseteq \calB \subseteq \calV$ and all $u \notin \calB$ it holds that
\begin{equation}\label{E:supermodularity}
	f\left( \calA \cup \{u\} \right) - f\left( \calA \right)
	\leq
	\alpha \left[ f\left( \calB \cup \{u\} \right) - f\left( \calB \right) \right]
		\text{,}
\end{equation}
for $\alpha \geq 0$. We say~$f$ is~\emph{$\alpha$-submodular} if $-f$ is $\alpha$-supermodular. For~$\alpha \geq 1$, \eqref{E:supermodularity} is equivalent to the traditional definition of supermodularity, in which case we refer to the function simply as \emph{supermodular}/\emph{submodular}~\cite{Bach14l}. For~$\alpha \in [0,1)$, however, $f$ is said to be \emph{approximately supermodular}/\emph{submodular}. Notice that~\eqref{E:supermodularity} always holds for $\alpha = 0$ if $f$ is monotone decreasing. Indeed, $f\left( \calA \cup \{u\} \right) - f\left( \calA \right) \leq 0$ in this case. Thus, $\alpha$-supermodularity is only of interest when $\alpha$ takes the largest value for which~\eqref{E:supermodularity} holds, i.e.,
\begin{equation}\label{E:alpha}
	\alpha = \min_{\substack{\calA \subseteq \calB \subseteq \calV \\ u \notin \calB}}
		\frac{f(\calA \cup \{u\}) - f(\calA)}{f(\calB \cup \{u\}) - f(\calB)}
		\text{.}
\end{equation}

Before proceeding, it is worth noting that~$\alpha$ is related to the \emph{submodularity ratio} from~\cite{Das11s}. It is, however, more amenable to give explicit bounds on its value~(see Section~\ref{S:alphaBounds}). The submodularity ratio bounds derived in~\cite{Das11s} depend on the minimum sparse eigenvalue of a matrix, which cannot be evaluated efficiently. Due to the relation between~$\alpha$ and the submodularity ratio, it is not surprising that similar near-optimal bounds hold for $\alpha$-supermodular functions.

\begin{theorem}
	\label{T:greedy}
	Let $f^\star = f(\calS^\star)$ be the optimal value of the problem
	\begin{equation}\label{E:greedyOptim}
		\underset{\calS \subseteq \calV \text{, } \abs{\calS} \leq k}{\textup{minimize}}
		\quad
		f\left( \calS \right)
	\end{equation}
	and~$\calG_\ell$ be the set obtained by applying Algorithm~\ref{L:greedy}. If $f$ is (i)~monotone decreasing and (ii)~$\alpha$-supermodular, then
	\begin{equation}\label{E:relOptimality}
		\frac{f(\calG_\ell) - f^\star}{f(\{\}) - f^\star}
			\leq \left( 1 - \frac{\alpha}{k} \right)^{\ell}
			\leq e^{-\alpha \ell/k}
			\text{,}
	\end{equation}
	where~$f(\{\})$ is the value of function for the empty set. If $f$ is normalized, i.e., $f(\{\}) = 0$, \eqref{E:relOptimality} reduces to
	\begin{equation*}
		f(\calG_\ell) \leq \left( 1 - e^{-\alpha \ell/k} \right) f^\star
			\text{.}
	\end{equation*}

\end{theorem}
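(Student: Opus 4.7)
The plan is to adapt the classical Nemhauser--Wolsey--Fisher argument for greedy submodular maximization to the approximate-supermodular setting. The key one-step inequality to establish is that at each iteration $j$, the greedy decrement $f(\calG_j) - f(\calG_{j+1})$ is at least $(\alpha/k)$ times the remaining gap $f(\calG_j) - f^\star$; once this is in hand, the bound~\eqref{E:relOptimality} follows by a simple recursion and induction on $\ell$ starting from $\calG_0 = \{\}$.

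To establish the one-step inequality, I would enumerate an optimal solution as $\calS^\star = \{s_1,\dots,s_k\}$ and first use the monotone decreasing property to write $f(\calG_j) - f^\star \leq f(\calG_j) - f(\calS^\star \cup \calG_j)$, since $f(\calS^\star \cup \calG_j) \leq f(\calS^\star) = f^\star$. Then I would telescope the right-hand side along the chain $\calG_j \subset \calG_j \cup \{s_1\} \subset \cdots \subset \calG_j \cup \calS^\star$, obtaining
\begin{equation*}
	f(\calG_j) - f(\calS^\star \cup \calG_j)
	= \sum_{i=1}^{k} \bigl[ f(\calG_j \cup \{s_1,\dots,s_{i-1}\}) - f(\calG_j \cup \{s_1,\dots,s_i\}) \bigr].
\end{equation*}

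Next, I would apply the $\alpha$-supermodularity definition~\eqref{E:supermodularity} with $\calA = \calG_j$, $\calB = \calG_j \cup \{s_1,\dots,s_{i-1}\}$ and $u = s_i$, rearranged into its decrement form, to bound each telescoping term by $\alpha^{-1}\bigl[f(\calG_j) - f(\calG_j \cup \{s_i\})\bigr]$. Because the greedy rule in Algorithm~\ref{L:greedy} selects the element maximizing this very decrement, each summand is in turn at most $\alpha^{-1}[f(\calG_j) - f(\calG_{j+1})]$, yielding $f(\calG_j) - f(\calS^\star \cup \calG_j) \leq (k/\alpha)[f(\calG_j) - f(\calG_{j+1})]$. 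Combined with the monotonicity step, this produces
\begin{equation*}
	f(\calG_{j+1}) - f^\star \leq \left( 1 - \frac{\alpha}{k} \right) \bigl[ f(\calG_j) - f^\star \bigr].
\end{equation*}
Iterating $\ell$ times gives~\eqref{E:relOptimality}, and the exponential form follows from the elementary bound $1 - x \leq e^{-x}$; the normalized statement is immediate from $f(\{\}) = 0$.

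The main subtlety, rather than a real obstacle, is tracking inequality directions carefully: because $f$ is monotone \emph{decreasing}, marginal differences are nonpositive and the $\alpha$-supermodular inequality must be rewritten in the form $f(\calA) - f(\calA \cup \{u\}) \geq \alpha\bigl[f(\calB) - f(\calB \cup \{u\})\bigr]$ before it can be chained against the greedy decrement. Beyond this bookkeeping, the derivation parallels the classical bound almost verbatim, with $\alpha$ entering exactly in the slot where ordinary supermodularity would supply a factor of one, so the resulting constant degrades smoothly as $\alpha \to 0$.
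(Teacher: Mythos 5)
Your proposal is correct and follows essentially the same route as the paper's proof: monotonicity to compare $f^\star$ with $f(\calS^\star \cup \calG_j)$, a telescoping sum over the elements of $\calS^\star$, the $\alpha$-supermodular inequality with $\calA = \calG_j$ to pull each term back to a marginal at $\calG_j$, the greedy selection rule to bound those marginals by the actual greedy decrement, and the resulting recursion $\delta_{j+1} \leq (1 - \alpha/k)\,\delta_j$. The only cosmetic difference is that you phrase the chain in decrement form $f(\calA) - f(\calA \cup \{u\})$ while the paper uses increments, which is exactly the sign bookkeeping you flag.
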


\begin{proof}
	Using the fact that $f$ is monotone decreasing, it holds for every set $\calG_j$ that
	\begin{equation*}
		f(\calS^\star) \geq f(\calS^\star \cup \calG_j)
			\text{.}
	\end{equation*}
	Using a telescopic sum then gives
	\begin{equation}\label{E:greedyTelescopic}
		f(\calS^\star) \geq f(\calG_j) +
			\sum_{i = 1}^{k} f(\calT_{i-1} \cup \{s_i^\star\}) - f(\calT_{i-1})
			\text{,}
	\end{equation}
	where $\calT_{i} = \calG_j \cup \{ s_{1}^\star, \dots, s_{i}^\star \}$ and $s_i^\star$ is the $i$-th element of~$\calS^\star$. Since $f$ is $\alpha$-supermodular and $\calG_j \subseteq \calT_i$ for all $i$, the incremental gains in the summation in~\eqref{E:greedyTelescopic} can be bounded using~\eqref{E:supermodularity} to get
	\begin{equation*}
		f(\calS^\star) \geq f(\calG_j) +
			\alpha^{-1} \sum_{i = 1}^{k} \left[ f(\calG_j \cup \{s_i^\star\}) - f(\calG_j) \right]
			\text{.}
	\end{equation*}
	Finally, given that $\calG_{j+1} = \calG_j \cup \{u\}$ is chosen to minimize~$f(\calG_{j+1})$~(see Algorithm~\ref{L:greedy}),
	\begin{equation}\label{E:greedyRecursion1}
		f(\calS^\star) \geq f(\calG_j) +
			\alpha^{-1} k \left[ f(\calG_{j+1}) - f(\calG_j) \right]
			\text{.}
	\end{equation}

	To obtain a recursion, let $\delta_j = f(\calG_j) - f(\calS^\star)$ so that~\eqref{E:greedyRecursion1} becomes
	\begin{align*}
		\delta_j \leq \alpha^{-1} k \left[ \delta_{j} - \delta_{j+1} \right]
		\Rightarrow
		\delta_{j+1} \leq \left( 1 - \frac{1}{\alpha^{-1} k} \right) \delta_{j}
			\text{.}
	\end{align*}
	Noting that $\delta_0 = f(\{\}) - f(\calS^\star)$, we can solve this recursion to get
	\begin{align*}
		\frac{f(\calG_\ell) - f(\calS^\star)}{f(\{\}) - f(\calS^\star)} \leq
		\left( 1 - \frac{\alpha}{k} \right)^{\ell}
			\text{.}
	\end{align*}
	Using the fact that $1 - x \leq e^{-x}$ yields~\eqref{E:relOptimality}.
\end{proof}

Theorem~\ref{T:greedy} bounds the relative suboptimality of the greedy solution to problem~\eqref{E:greedyOptim} when~$f$ is decreasing and $\alpha$-supermodular. Under these conditions, it guarantees a minimum improvement of the greedy solution over the empty set. What is more, it quantifies the effect of relaxing the supermodularity hypothesis in~\eqref{E:supermodularity}. Indeed, when $f$ is supermodular~($\alpha = 1$) and the greedy search in Algorithm~\ref{L:greedy} is repeated~$k$ times~($\ell = k$), we recover the $e^{-1} \approx 0.37$ guarantee from~\cite{Nemhauser78a}. On the other hand, if~$f$ is not supermodular~($\alpha < 1$), \eqref{E:relOptimality} shows that the same $37\%$ guarantee can be obtained by greedily selecting a set of size~$\alpha^{-1} k$. Thus, $\alpha$ not only quantifies how much $f$ violates supermodularity, but also gives a factor by which a solution set must increase to maintain supermodular near-optimality. In other words, it measures the constraint violation needed to recover the~$37\%$ guarantee. It is worth noting that, as with the original bound in~\cite{Nemhauser78a}, \eqref{E:relOptimality} is not tight and that better results are common in practice~(see Section~\ref{S:Simulations}).

In the sequel, we show that~$\MSE(\calS)$ is a monotone decreasing and $\alpha$-supermodular function of the sampling set~$\calS$. We also provide an explicit lower bound on~$\alpha$ as a function of the SNR. This result simultaneously provide near-optimal performance guarantees based on Theorem~\ref{T:greedy} and sheds light on why greedy algorithms have been so successful in GSP applications.

\subsection{Near-optimality of greedy sampling set selection}
	\label{S:alphaBounds}

\begin{figure}[tb]
\centering
\includesvg{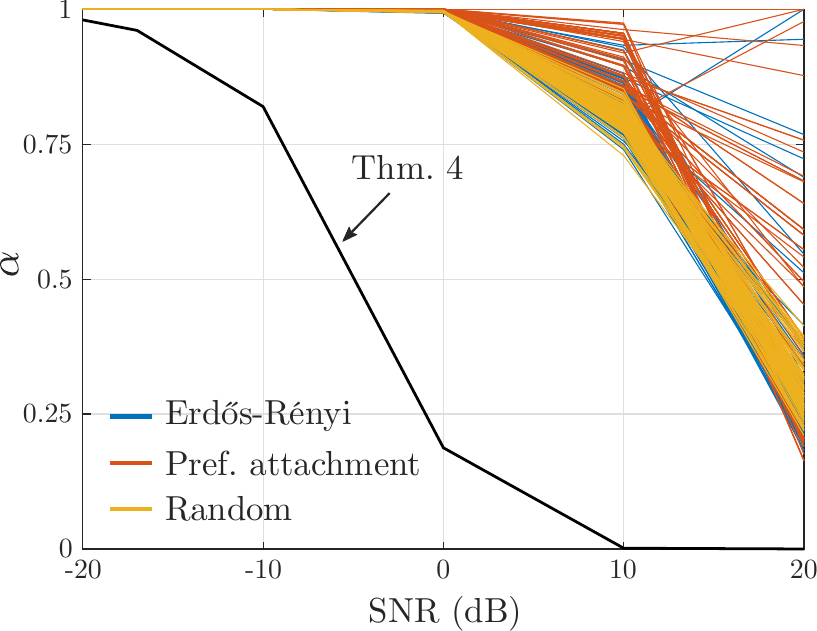}
\caption{Comparison between the bound in~\eqref{E:mainAlpha2} and $\alpha$}
	\label{F:alpha}
\end{figure}

The main result of this section is:

\begin{theorem}\label{T:main}

Let~$\calS^\star$ be the solution of~\eqref{P:SSS} and~$\calG_\ell$ be the result of applying Algorithm~\ref{L:greedy} for~$f(\calS) = \MSE(\calS)$. Then, 
\begin{equation*}
	\frac{\MSE(\calG_\ell) - \MSE(\calS^\star)}{\MSE(\{\}) - \MSE(\calS^\star)}
		\leq e^{-\alpha \ell/k}
		\text{,}
\end{equation*}
where
\begin{equation}\label{E:mainAlpha1}
\alpha \geq \frac{
		\lambda_\textup{max}(\bLambda_w)^{-1} + \mu_\textup{max}^{-1}
	}{
		\lambda_\textup{max}(\bLambda_w)^{-1} + \mu_\textup{min}^{-1}
	}
	\frac{
		\mu_\textup{min}^{2}
	}{
		\kappa_2(\bW) \, \mu_\textup{max}^{2}
	}
\end{equation}
for~$\mu_\textup{min} \leq \lambda_\textup{min} \left[ \bLambda^{-1} \right]$, $\mu_\textup{max} \geq \lambda_\textup{max} \left[ \bLambda^{-1} + \bV_\calK^H \bLambda_{w}^{-1} \bV_\calK \right]$, and~$\kappa_2(\bW)$ is the 2-norm condition number of~$\bW$. Assuming $\bLambda = \sigma_x^2 \bI$ and~$\bLambda_w = \sigma_w^2 \bI$, \eqref{E:mainAlpha1} reduces to
\begin{equation}\label{E:mainAlpha2}
	\alpha \geq	\frac{1 + 2 \gamma}{\kappa_2(\bW) \, (1 + \gamma)^{4}}
		\text{,} \quad  \text{for}\ \ \gamma = \frac{\sigma_x^2}{\sigma_w^2}
		\text{.}
\end{equation}
\end{theorem}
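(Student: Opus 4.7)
The plan is to apply Theorem~\ref{T:greedy} to the set function $f(\calS) = \MSE(\calS)$, which requires verifying (i) monotonicity and (ii) an explicit lower bound on the $\alpha$-supermodularity constant from \eqref{E:alpha}. Monotonicity of $\MSE$ is immediate: Lemma~\ref{T:monotonicity} states that $\bK^\star$ is monotone decreasing in the PSD cone, and the trace operator is monotone on PSD matrices, so $\calA \subseteq \calB$ implies $\MSE(\calA) \geq \MSE(\calB)$. The bulk of the work is therefore to lower bound $\alpha$.

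I would start by writing $\MSE(\calS) = \trace[\bW \bKb(\calS)]$ via a cyclic permutation of the trace in \eqref{E:Kstar}, with $\bKb(\calS) = (\bLambda^{-1} + \bR(\calS))^{-1}$ and $\bR$ the additive mapping from the proof of Lemma~\ref{T:monotonicity}. Adding one element~$u$ to $\calS$ amounts to a rank-one update $\bR(\calS \cup \{u\}) = \bR(\calS) + \lambda_{w,u}^{-1} \bv_u \bv_u^H$, so the Sherman--Morrison identity gives the closed-form marginal
\begin{equation*}
\Delta_u(\calS) := \MSE(\calS) - \MSE(\calS \cup \{u\}) = \frac{\lambda_{w,u}^{-1}\, \bv_u^H \bKb(\calS) \bW \bKb(\calS) \bv_u}{1 + \lambda_{w,u}^{-1}\, \bv_u^H \bKb(\calS) \bv_u} \text{.}
\end{equation*}
For $\calA \subseteq \calB$ and $u \notin \calB$, I would then factor $\Delta_u(\calA)/\Delta_u(\calB)$ as the product of a \emph{quadratic-form ratio} involving $\bW$ and a \emph{denominator ratio} involving the $1 + \lambda_{w,u}^{-1} \bv_u^H \bKb(\cdot)\bv_u$ terms, and bound each factor separately.

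For the quadratic-form ratio, I would sandwich $\bW$ between $\lambda_{\min}(\bW) \bI$ and $\lambda_{\max}(\bW) \bI$ to reduce it to a ratio of the form $\bv_u^H \bKb(\calA)^2 \bv_u \big/ \bv_u^H \bKb(\calB)^2 \bv_u$ up to the factor $1/\kappa_2(\bW)$. Since $\bLambda^{-1} + \bR(\calS)$ lies in the spectral interval $[\mu_{\min},\mu_{\max}]$ uniformly in $\calS$, one has $\mu_{\max}^{-1}\bI \preceq \bKb(\calS) \preceq \mu_{\min}^{-1} \bI$, and so $\bKb(\calA)^2 \succeq \mu_{\max}^{-2} \bI$ and $\bKb(\calB)^2 \preceq \mu_{\min}^{-2} \bI$. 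This yields the factor $\mu_{\min}^2/(\kappa_2(\bW)\,\mu_{\max}^2)$ in~\eqref{E:mainAlpha1}. For the denominator ratio, the same spectral envelope on $\bKb$ together with the elementary observation that $(1+at)/(1+bt)$ with $0 \leq a \leq b$ is nonincreasing in $t \geq 0$ lets me reduce the problem to bounding $\lambda_{w,u}^{-1}\|\bv_u\|^2$ uniformly; here one uses that $\bv_u$ is a subvector of a row of the unitary matrix $\bV$, hence $\|\bv_u\|\leq 1$. Combining these gives the first factor of~\eqref{E:mainAlpha1}. Plugging into Theorem~\ref{T:greedy} and specializing to $\bLambda = \sigma_x^2 \bI$ and $\bLambda_w = \sigma_w^2 \bI$, where $\mu_{\min} = \sigma_x^{-2}$ and $\mu_{\max} = \sigma_x^{-2} + \sigma_w^{-2}$, a direct algebraic simplification in terms of $\gamma = \sigma_x^2/\sigma_w^2$ produces~\eqref{E:mainAlpha2}.

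The main obstacle I anticipate is handling the denominator ratio cleanly: the minimum over $(\calA,\calB,u)$ in the definition of $\alpha$ couples the spectral envelope for $\bKb$ on the direction $\bv_u/\|\bv_u\|$ with the scaling $\lambda_{w,u}^{-1}\|\bv_u\|^2$, and one must argue carefully that the worst case is attained at the extremes of both. Once that monotonicity-in-$t$ argument is pinned down, the remaining computations are routine linear algebra: Sherman--Morrison, operator monotonicity of the inverse, and eigenvalue sandwiching.
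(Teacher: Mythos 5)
Your proposal follows essentially the same route as the paper's: Theorem~\ref{T:greedy} combined with the monotonicity from Lemma~\ref{T:monotonicity} and the $\alpha$-supermodularity analysis of Lemma~\ref{T:MSEFull} in Appendix~\ref{A:ThmMseFull}, i.e., a Sherman--Morrison closed form for the marginal MSE decrement, a factorization of the increment ratio in~\eqref{E:alpha} into a $\bW$-weighted quadratic-form ratio and a denominator ratio, spectral-envelope/Rayleigh-quotient bounds on each, and a final monotonicity argument in $\norm{\bv_u}_2^2$ and the noise level. The only substantive difference is one of normalization: you keep the standard Sherman--Morrison denominator $1+\lambda_{w,u}^{-1}\bv_u^{H}\bKb(\calS)\bv_u$, which leads to a first factor of the form $\bigl(\lambda_{\textup{min}}(\bLambda_w)+\mu_{\textup{max}}^{-1}\bigr)/\bigl(\lambda_{\textup{min}}(\bLambda_w)+\mu_{\textup{min}}^{-1}\bigr)$ rather than the one printed in~\eqref{E:mainAlpha1}; your version is the one that actually specializes to~\eqref{E:mainAlpha2} when $\bLambda_w=\sigma_w^2\bI$, so this mismatch traces to the appendix's writing of the rank-one update denominator as $\lambda_{w,u}^{-1}+\bv_u^{H}\bZ(\calA)^{-1}\bv_u$ rather than to any flaw in your argument.
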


Theorem~\ref{T:main} establishes that a near-optimal solution to the sampling set selection problem in~\eqref{P:SSS} can be obtained efficiently using greedy search. Though strong empirical evidence exists that greedily minimizing the MSE yields good results in contexts such as regression, dictionary learning, and graph signal processing~\cite{Das11s, Chen15d, Anis16e, Shomorony14s, Thanou14l, Tsitsvero16s}, this result is counter-intuitive given that the MSE is not supermodular in general. For instance, restrictive and often unrealistic conditions on data distribution are required to obtain supermodularity in the context of regression~\cite{Das11s}.

Theorem~\ref{T:main} therefore reconciles the empirical success of greedy sampling set selection and the non-supermodularity of the MSE by bounding the suboptimality of greedy sampling. In particular, \eqref{E:mainAlpha2} gives a simple bound on~$\alpha$ in terms of the SNR and the condition number of~$\bW$ that gives clear insights into its behavior. Indeed, as $\gamma \to \infty$ and we approach the noiseless case, $\alpha \to 0$. This is expected as in the noiseless case almost every set of size~$\abs{\calK}$ achieves perfect reconstruction, so that the choice of sampling nodes is irrelevant. On the other hand, $\alpha \to 1$ as~$\gamma \to 0$, i.e., the MSE becomes closer to supermodular as the SNR decreases. Given that reconstruction errors are small for high SNR, Theorem~\ref{T:main} guarantees that greedy sampling performs well when it is most needed. Similar trends can be observed in the more general setting of~\eqref{E:mainAlpha1}. These observations are illustrated in Figure~\ref{F:alpha} that compares the bound in~\eqref{E:mainAlpha2} to the true value of~$\alpha$ for the MSE~(found by exhaustive search) in $100$~realizations of random graphs~(see Section~\ref{S:Simulations} for details).

Theorem~\ref{T:main} stems directly from Theorem~\ref{T:greedy} and the following characterizations of the MSE function:

\begin{lemma}\label{T:MSEFull}

The scalar set functions $\MSE(\calS) = \trace[\bK^\star(\calS)]$ is (i)~monotone decreasing and (ii)~$\alpha$-supermodular with
\begin{equation}\label{E:alphaMSEFull}
\alpha \geq \frac{
		\lambda_\textup{max}(\bLambda_w)^{-1} + \mu_\textup{max}^{-1}
	}{
		\lambda_\textup{max}(\bLambda_w)^{-1} + \mu_\textup{min}^{-1}
	}
	\frac{
		\mu_\textup{min}^{2}
	}{
		\kappa_2(\bW) \, \mu_\textup{max}^{2}
	}
		\text{,}
\end{equation}
where~$\mu_\textup{min} \leq \lambda_\textup{min} \left[ \bLambda^{-1} \right]$, $\mu_\textup{max} \geq \lambda_\textup{max} \left[ \bLambda^{-1} + \bV_\calK^H \bLambda_{w}^{-1} \bV_\calK \right]$, and~$\kappa_2(\bW)$ is the 2-norm condition number of~$\bW$~\cite{Horn13}.

\end{lemma}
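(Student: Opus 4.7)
Part (i), monotonicity, is immediate from Lemma~\ref{T:monotonicity}: if $\calA \subseteq \calB$ then $\bK^\star(\calA) \succeq \bK^\star(\calB)$, and the PSD-monotonicity of the trace yields $\MSE(\calA) \geq \MSE(\calB)$. The substance lies in (ii).

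My plan for the $\alpha$-supermodularity bound is to obtain a closed-form expression for the marginal gain by Sherman--Morrison, then lower-bound its ratio uniformly. Writing $\MSE(\calS) = \trace[\bW\,\bKb(\calS)]$ with $\bKb(\calS)^{-1} = \bLambda^{-1} + \sum_{i\in\calS}\lambda_{w,i}^{-1}\bv_i\bv_i^{H}$, adjoining a single element $u$ is a rank-one update of $\bKb^{-1}$, so Sherman--Morrison produces
\[
g_u(\calS) := \MSE(\calS) - \MSE(\calS\cup\{u\}) = \frac{\lambda_{w,u}^{-1}\,\bv_u^{H}\bKb(\calS)\bW\bKb(\calS)\bv_u}{1 + \lambda_{w,u}^{-1}\,\bv_u^{H}\bKb(\calS)\bv_u} \geq 0,
\]
which reconfirms monotonicity and reduces $\alpha$-supermodularity to lower-bounding $g_u(\calA)/g_u(\calB)$ uniformly over $\calA \subseteq \calB$ and $u \notin \calB$. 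I would write this ratio as the product of a ``quadratic-form factor'' $\bv_u^H\bKb(\calA)\bW\bKb(\calA)\bv_u/\bv_u^H\bKb(\calB)\bW\bKb(\calB)\bv_u$ and a ``denominator factor'' $[1+\lambda_{w,u}^{-1}\bv_u^H\bKb(\calB)\bv_u]/[1+\lambda_{w,u}^{-1}\bv_u^H\bKb(\calA)\bv_u]$, and bound each separately.

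Both factors can be controlled by uniform spectral bounds on $\bKb(\calS)$: since $\bR(\calS):=\sum_{i\in\calS}\lambda_{w,i}^{-1}\bv_i\bv_i^H$ is PSD-monotone in $\calS$, one has $\bLambda^{-1}\preceq\bKb(\calS)^{-1}\preceq\bLambda^{-1}+\bV_\calK^{H}\bLambda_w^{-1}\bV_\calK$ for every $\calS$, and inversion gives $\mu_\textup{max}^{-1}\bI\preceq\bKb(\calS)\preceq\mu_\textup{min}^{-1}\bI$. For the quadratic-form factor I would chain $\bv^H\bM\bW\bM\bv \geq \lambda_\textup{min}(\bW)\|\bM\bv\|^2 \geq \lambda_\textup{min}(\bW)\lambda_\textup{min}(\bM)^2\|\bv\|^2$ applied at $\bM=\bKb(\calA)$, and the matching upper bound with $\lambda_\textup{max}$ applied at $\bM=\bKb(\calB)$; the $\|\bv_u\|^2$ factors cancel and what remains is exactly the $\mu_\textup{min}^{2}/[\kappa_2(\bW)\mu_\textup{max}^{2}]$ piece of \eqref{E:alphaMSEFull}. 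For the denominator factor the same envelope reduces the ratio to a scalar expression of the form $(\lambda_{w,u}+\mu_\textup{max}^{-1}\|\bv_u\|^2)/(\lambda_{w,u}+\mu_\textup{min}^{-1}\|\bv_u\|^2)$; since $\bV$ is unitary each row of $\bV_\calK$ has norm at most one, so $\|\bv_u\|^2\in[0,1]$, and the monotonicity of $\lambda\mapsto(\lambda+a)/(\lambda+b)$ for $a\leq b$ pins down the worst case in closed form, delivering the remaining factor of \eqref{E:alphaMSEFull}. Specialising to $\bLambda=\sigma_x^{2}\bI$, $\bLambda_w=\sigma_w^{2}\bI$, so that $\mu_\textup{min}=\sigma_x^{-2}$ and $\mu_\textup{max}=\sigma_x^{-2}+\sigma_w^{-2}$, a short calculation with $\gamma=\sigma_x^2/\sigma_w^2$ should then collapse the product to the homoscedastic bound $(1+2\gamma)/[\kappa_2(\bW)(1+\gamma)^{4}]$ of Theorem~\ref{T:main}.

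The main technical obstacle is the denominator factor: its numerator and denominator move in the same direction as $\calS$ grows, so the plain monotonicity of $\bKb(\calS)$ is \emph{insufficient}---only the two-sided spectral envelope yields a nontrivial bound. A secondary subtlety is that the worst case must be identified jointly over $\|\bv_u\|^2$ and the admissible range of $\lambda_{w,u}$; fortunately the resulting scalar function is monotone in each argument separately, so its minimum sits at a corner of the box and can be read off directly.
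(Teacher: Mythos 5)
Your argument is, in substance, the paper's own proof of Lemma~\ref{T:MSEFull}: the same rank-one (Sherman--Morrison) expression for the marginal gain, the same factorization of the increment ratio into a quadratic-form factor and a denominator factor, the same two-sided envelope $\mu_\textup{max}^{-1}\bI \preceq \bKb(\calS) \preceq \mu_\textup{min}^{-1}\bI$ bounding both factors, and the same corner argument over $\norm{\bv_u}_2^2 \leq 1$ and the admissible range of $\lambda_{w,u}$; all of these steps are sound. The one point to be aware of is that your (correct) denominator $1+\lambda_{w,u}^{-1}\bv_u^{H}\bKb(\calS)\bv_u$, i.e.\ $\lambda_{w,u}+\bv_u^{H}\bKb(\calS)\bv_u$ after clearing, together with the fact that $(\lambda+a)/(\lambda+b)$ is increasing in $\lambda$ for $a=\mu_\textup{max}^{-1}\leq b=\mu_\textup{min}^{-1}$, places the worst case at $\lambda_{w,u}=\lambda_\textup{min}(\bLambda_w)$ and hence yields the first factor $\bigl[\lambda_\textup{min}(\bLambda_w)+\mu_\textup{max}^{-1}\bigr]/\bigl[\lambda_\textup{min}(\bLambda_w)+\mu_\textup{min}^{-1}\bigr]$, not the $\bigl[\lambda_\textup{max}(\bLambda_w)^{-1}+\mu_\textup{max}^{-1}\bigr]/\bigl[\lambda_\textup{max}(\bLambda_w)^{-1}+\mu_\textup{min}^{-1}\bigr]$ printed in \eqref{E:alphaMSEFull}; the printed form traces to the appendix writing the matrix-inversion-lemma denominator as $\lambda_{w,u}^{-1}+\bv_u^{H}\bZ(\calA)^{-1}\bv_u$ where it should be $\lambda_{w,u}+\bv_u^{H}\bZ(\calA)^{-1}\bv_u$, and your version is the one that is dimensionally consistent and actually collapses to $(1+2\gamma)/[\kappa_2(\bW)(1+\gamma)^4]$ when $\bLambda=\sigma_x^2\bI$ and $\bLambda_w=\sigma_w^2\bI$. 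So do not adjust your factor to match the printed formula; your derivation is the correct instantiation of the paper's argument.
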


\begin{lemma}\label{T:MSE}

Assuming $\bLambda = \sigma_x^2 \bI$ and~$\bLambda_w = \sigma_w^2 \bI$, the set functions $\MSE(\calS) = \trace[\bK^\star(\calS)]$ is (i)~monotone decreasing and (ii)~$\alpha$-supermodular with
\begin{equation}\label{E:alphaMSE}
	\alpha \geq	\frac{1 + 2 \gamma}{\kappa_2(\bW) \, (1 + \gamma)^{4}}
		\text{,} \quad  \text{for}\ \ \gamma = \frac{\sigma_x^2}{\sigma_w^2}
		\text{,}
\end{equation}
where~$\kappa_2(\bW)$ is the 2-norm condition number of~$\bW$~\cite{Horn13}.

\end{lemma}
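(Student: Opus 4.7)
The monotonicity of $\MSE(\calS) = \trace[\bK^\star(\calS)]$ is immediate from Lemma~\ref{T:monotonicity} and the fact that the trace preserves the PSD partial order. For the $\alpha$-supermodularity claim, the plan is to evaluate the ratio defining $\alpha$ in~\eqref{E:alpha} directly: under the isotropic priors, Sherman--Morrison gives a closed-form expression for each marginal MSE reduction, and the ratio of two such reductions factors into a quadratic-form ratio and a rational ratio, both of which admit spectral lower bounds.

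\textbf{Closed-form marginal gain.} Define $\bM(\calS) = \sigma_x^{-2} \bI + \sigma_w^{-2} \sum_{i \in \calS} \bv_i \bv_i^H$ so that Proposition~\ref{T:optimalInterpolator} gives $\MSE(\calS) = \trace[\bW \bM(\calS)^{-1}]$ with $\bW = \bV_\calK^H \bH^H \bH \bV_\calK$. Since $\bM(\calS \cup \{u\}) - \bM(\calS) = \sigma_w^{-2} \bv_u \bv_u^H$ has rank one, Sherman--Morrison together with cyclicity of the trace yields
\begin{equation*}
\MSE(\calS) - \MSE(\calS \cup \{u\}) = \frac{\sigma_w^{-2} \, \bv_u^H \bM(\calS)^{-1} \bW \bM(\calS)^{-1} \bv_u}{1 + \sigma_w^{-2} \, \bv_u^H \bM(\calS)^{-1} \bv_u}.
\end{equation*}
Substituting into~\eqref{E:alpha}, the $\sigma_w^{-2}$ cancels and the remainder factors as the quadratic-form ratio $\bv_u^H \bM(\calA)^{-1} \bW \bM(\calA)^{-1} \bv_u \,/\, \bv_u^H \bM(\calB)^{-1} \bW \bM(\calB)^{-1} \bv_u$ multiplied by the rational ratio $(1 + \sigma_w^{-2} \bv_u^H \bM(\calB)^{-1} \bv_u) \,/\, (1 + \sigma_w^{-2} \bv_u^H \bM(\calA)^{-1} \bv_u)$.

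\textbf{Spectral bounds.} Orthonormality of the columns of $\bV_\calK$ forces $\sum_{i \in \calV} \bv_i \bv_i^H = \bI$, so $\bO \preceq \sum_{i \in \calS} \bv_i \bv_i^H \preceq \bI$ and therefore $\sigma_x^{-2} \bI \preceq \bM(\calS) \preceq \sigma_x^{-2}(1 + \gamma) \bI$ for every $\calS \subseteq \calV$. Since $\bV_\calK \bV_\calK^H$ is an orthogonal projection, $\norm{\bv_u}^2 \leq 1$. Applying $\bv^H \bM^{-1} \bW \bM^{-1} \bv \geq \lambda_{\min}(\bW) \bv^H \bM^{-2} \bv$ to the numerator and the dual inequality to the denominator bounds the quadratic-form ratio below by $1/[\kappa_2(\bW)(1+\gamma)^2]$. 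For the rational ratio, $\calA \subseteq \calB$ forces the numerator term to lie below the denominator term; parameterising $s = \norm{\bv_u}^2 \in [0,1]$ and using the spectral extremes of $\bM^{-1}$ reduces the worst case to $(1 + \gamma s/(1+\gamma))/(1 + \gamma s)$, which is decreasing in $s$ and attains its minimum $(1+2\gamma)/(1+\gamma)^2$ at $s = 1$. Multiplying the two factors yields $\alpha \geq (1+2\gamma)/[\kappa_2(\bW)(1+\gamma)^4]$.

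\textbf{Main obstacle.} The delicate ingredient is the rational ratio. Using the trivial bound ``numerator $\geq 1$'' collapses it to $1/(1+\gamma)$ and yields only $\alpha \geq 1/[\kappa_2(\bW)(1+\gamma)^3]$, losing the $(1+2\gamma)/(1+\gamma)$ improvement. Recovering the stated bound requires simultaneously pushing the numerator quadratic form down to its spectral floor $\sigma_w^{-2} \mu_{\max}^{-1} \norm{\bv_u}^2$ and the denominator up to its ceiling $\sigma_w^{-2} \mu_{\min}^{-1} \norm{\bv_u}^2$ while leaving $\norm{\bv_u}^2$ free, and only then verifying that the resulting univariate rational function is genuinely monotone on $[0,1]$ so that its infimum is attained at $\norm{\bv_u}^2 = 1$. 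The rest of the argument is algebraic specialisation of $\mu_{\min} = \sigma_x^{-2}$ and $\mu_{\max} = \sigma_x^{-2}(1+\gamma)$.
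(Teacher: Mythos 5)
Your proposal is correct and follows essentially the same route as the paper's proof: Sherman--Morrison for the rank-one marginal gain, factoring the ratio in~\eqref{E:alpha} into a quadratic-form ratio and a rational ratio, the spectral sandwich $\bI \preceq \bZ(\calX) \preceq (1+\gamma)\bI$ (your $\bM = \sigma_x^{-2}\bZ$), and monotonicity in $\norm{\bv_u}_2^2$ to place the worst case at $\norm{\bv_u}_2^2 = 1$. The only cosmetic difference is that you isolate the $s$-dependence in the rational factor alone, whereas the paper verifies the decrease of the combined bound by differentiating $\alpha^\prime$; the substance is identical.
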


The proof of Lemma~\ref{T:MSEFull} is deferred to Appendix~\ref{A:ThmMseFull}. Here, we proceed with the proof of Lemma~\ref{T:MSE} after stating a pertinent remark.

\begin{remark}

Since the MSE is \emph{not} supermodular, it is common to see surrogate supermodular figures of merit used instead, specially in statistics and experiment design~\cite{Bach14l, Das11s, Sagnol13a, Krause08n}. In particular, the log-determinant~$\log\det[\bK^\star(\calS)]$ is a common alternative to the objective~$\MSE(\calS) = \trace[\bK^\star(\calS)]$ used in~\eqref{P:SSS}. This is justified because the~$\log\det[\bK^\star(\calS)]$ is proportional to the volume of the confidence ellipsoids of the estimate when the data is Gaussian~\cite{Joshi09s, Sagnol13a}. This choice of objective is also common in the sensor placement literature due to its relation to information theoretic measures, such as entropy and mutual information~\cite{Krause08n}. By replacing the trace operator in~\eqref{P:SSS} by the~$\log\det$, the problem becomes a supermodular function minimization that can be efficiently approximated using greedy search, as shown in~\cite{Chamon16n, Chepuri16s}. We remark that minimizing the~$\log\det$ of the error covariance matrix and the MSE are not equivalent problems. 

\end{remark}

\subsection{Proof of Lemma~\ref{T:MSE}}

	Start by noticing that part~(i) stems directly from Lemma~\ref{T:monotonicity}. Indeed, the monotonicity of the trace implies that~$\bX \succeq \bY \Rightarrow \trace (\bX) \geq \trace (\bY)$, for any PSD matrices $\bX$ and~$\bY$.
	
	Then, to obtain part~(ii), use the homeoscedasticity assumption to rewrite~\eqref{E:Kstar} as
\begin{equation}\label{E:KS2}
	\bK^\star(\calS) = \sigma_x^2 \bH \bV_\calK \bZ(\calS)^{-1} \bV_\calK^{H} \bH^{H}
		\text{,}
\end{equation}
where~$\bZ(\calS) = \bI + \gamma \sum_{i \in \calS} \bv_i^{} \bv_i^{H}$ and~$\gamma = \sigma_x^2/\sigma_w^2$ is the SNR. Then, proceed to obtain a closed form expression for the increments in~\eqref{E:alpha} by using~\eqref{E:KS2} to get
	\begin{multline*}
		f(\calA \cup \{u\}) - f(\calA) ={}
		\\
		\sigma_x^2 \trace \left[
			\bW \left( \bZ(\calA) + \gamma \bv_u^{} \bv_u^{H} \right)^{-1}
			- \bW \bZ(\calA)^{-1}
		\right]
			\text{.}
	\end{multline*}
	From the matrix inversion lemma~\cite{Horn13}, this expression reduces to
	\begin{equation*}
		f(\calA \cup \{u\}) - f(\calA) =
		- \sigma_x^2 \trace \left[
			\bW \frac{
				\bZ(\calA)^{-1} \bv_u \bv_u^{H} \bZ(\calA)^{-1}
			}{
				\gamma^{-1} + \bv_u^{H} \bZ(\calA)^{-1} \bv_u
			}
		\right]
			\text{,}
	\end{equation*}
	which using the commutation property of the trace yields
	\begin{equation}\label{E:mseIncrement}
		f(\calA \cup \{u\}) - f(\calA) =
		-\sigma_x^2 \frac{
			\bv_u^{H} \bZ(\calA)^{-1} \bW \bZ(\calA)^{-1} \bv_u
		}{
			\gamma^{-1} + \bv_u^{H} \bZ(\calA)^{-1} \bv_u
		}
			\text{.}
	\end{equation}
	From~\eqref{E:mseIncrement}, the expression for~$\alpha$ in~\eqref{E:alpha} becomes
	\begin{equation}\label{E:alpha2}
	\alpha = \min_{\substack{\calA \subseteq \calB \subseteq \calV \\ u \notin \calB}}
			\frac{
				\gamma^{-1} + \bv_u^{H} \bZ(\calB)^{-1} \bv_u
			}{
				\gamma^{-1} + \bv_u^{H} \bZ(\calA)^{-1} \bv_u
			}
			\frac{
				\bv_u^{H} \bZ(\calA)^{-1} \bW \bZ(\calA)^{-1} \bv_u
			}{
				\bv_u^{H} \bZ(\calB)^{-1} \bW \bZ(\calB)^{-1} \bv_u
			}
		\text{.}
	\end{equation}

	To bound~\eqref{E:alpha2}, first notice that for any set $\calX \subseteq \calV$
	\begin{equation}
		1 \leq \lambda_\text{min}\left[ \bZ(\calX) \right] \leq
		\lambda_\text{max}\left[ \bZ(\calX) \right] \leq 1+\gamma 
			\text{,}
	\end{equation}
	where $\lambda_\text{min}$ and $\lambda_\text{max}$ denote the minimum and maximum eigenvalues of a matrix. These bounds are achieved for the empty set and~$\calV$, respectively. Then, using the Rayleigh quotient inequalities~\cite{Horn13}
	\begin{equation*}
		\norm{\bb}_2^2 \lambda_\text{min}(\bA) \leq \bb^H \bA \bb \leq \norm{\bb}_2^2 \lambda_\text{max}(\bA)
			\text{,}
	\end{equation*}
	we get that~\eqref{E:alpha2} is bounded by
	\begin{equation*}
	\alpha \geq \frac{\gamma^{-1} + \norm{\bv_u}_2^2 (1+\gamma)^{-1}}{\gamma^{-1} + \norm{\bv_u}_2^2}
		\cdot
		\frac{
			\lambda_\textup{min} \left[ \bZ(\calA)^{-1} \bW \bZ(\calA)^{-1} \right]
		}{
			\lambda_\textup{max} \left[ \bZ(\calB)^{-1} \bW \bZ(\calB)^{-1} \right]
		}
		\text{.}
	\end{equation*}
	To simplify this expression, let~$\sigma_i(\bA)$ denote the $i$-th singular value of~$\bA$ and recall that for~$\bA \succeq 0$, $\sigma_i^2(\bA) = \lambda_i(\bA\bA^H)$. Thus, we can write~$\lambda_i \left[ \bZ(\calA)^{-1} \bW \bZ(\calA)^{-1} \right] = \sigma_i^2 \left[ \bZ(\calA)^{-1} \bW^{1/2} \right]$, which is well-defined since~$\bW \succeq 0$. Using the fact that~$\sigma_\text{max}(\bA \bB) \leq \sigma_\text{max}(\bA) \sigma_\text{max}(\bB)$ and~$\sigma_\text{min}(\bA \bB) \geq \sigma_\text{min}(\bA) \sigma_\text{min}(\bB)$~\cite[Thm.~9.H.1, p.~338]{Marshall09i}, we obtain
	\begin{equation}\label{E:alpha3}
	\alpha \geq \frac{\gamma^{-1} + 1 + \norm{\bv_u}_2^2}{\gamma^{-1} + \norm{\bv_u}_2^2}
		\cdot
		\frac{(1+\gamma)^{-3}}{\kappa_2(\bW)}
	\triangleq \alpha^\prime
		\text{.}
	\end{equation}
	where $\kappa_2(\bW) = \lambda_\text{max}(\bW) / \lambda_\text{min}(\bW)$ is the 2-norm condition number of $\bW$~\cite{Horn13}.
	
	Finally, to obtain the expression in Lemma~\ref{T:MSE}, notice that~\eqref{E:alpha3} is decreasing with respect to~$\norm{\bv_u}_2^2$. Indeed, since~$\kappa_2 \geq 1$ and~$\gamma \geq 0$,
	\begin{equation*}
		\frac{\partial \alpha^\prime}{\partial \norm{\bv_u}_2^2} =
		\frac{
			- (1+\gamma)^{-3}
		}{
			\kappa_2(\bW) \left( \gamma^{-1} + \norm{\bv_u}_2^2 \right)^2
		} \leq 0
		\text{.}
	\end{equation*}
	Given that~$\bv_u$ is a row of~$\bV_\calK$, i.e., it is composed of a subset of elements from a unit vector, $\norm{\bv_u}_2^2 \leq 1$ and we obtain the result in~\eqref{E:alphaMSE}.\qedhere

\section{Numerical Examples and Applications}
	\label{S:NumericalEg}

\begin{figure*}[t]
\centering
\includesvg{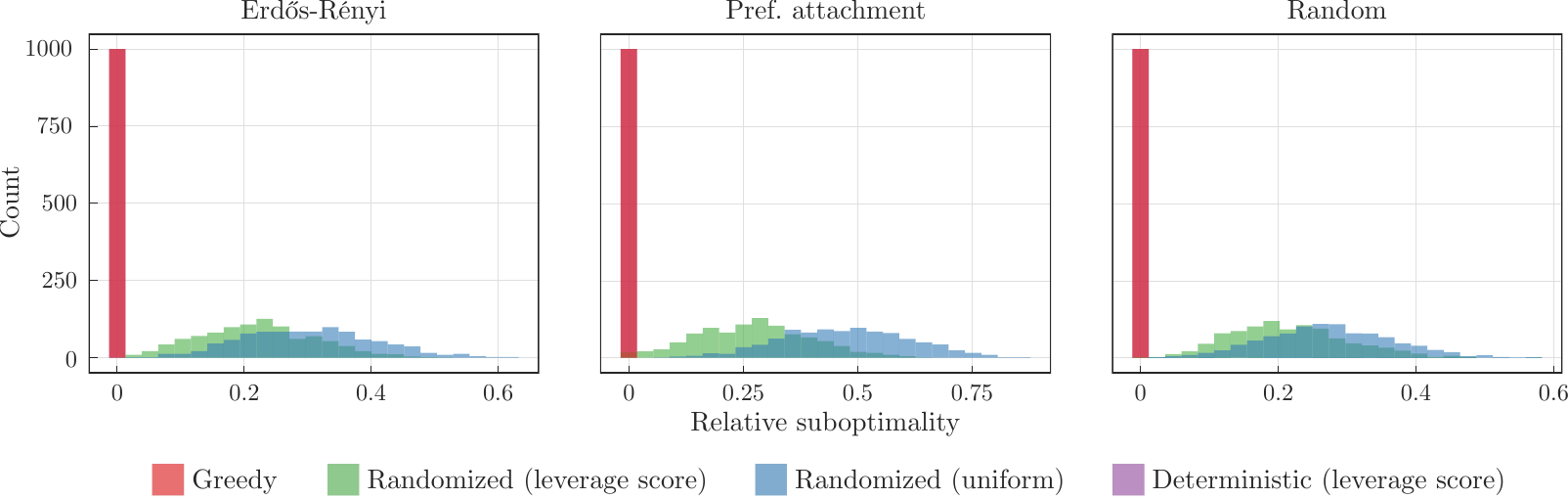}
\caption{Relative suboptimality of sampling schemes for low SNR~($\text{SNR} = -20$~dB)}
	\label{F:LowSNR}
\end{figure*}

\begin{figure*}[t]
\centering
\includesvg{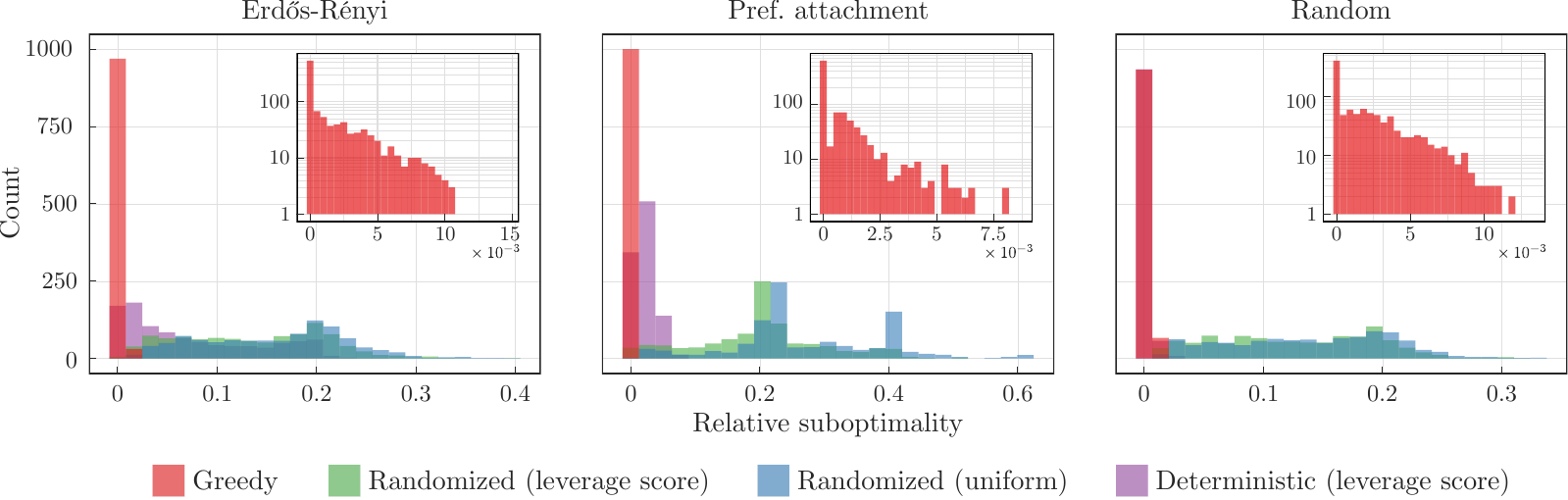}
\caption{Relative suboptimality of sampling schemes for high SNR~($\text{SNR} = 20$~dB)}
	\label{F:HighSNR}
\end{figure*}

\begin{algorithm}[t]
\centering
\caption{Greedy sampling set selection}
	\label{L:greedyMSE}
	\setlength{\baselineskip}{1.25\baselineskip}
	\begin{algorithmic}[0]
		\State $\calG_0 = \{\}$ and $\bK^\star_0 = \bLambda$

		\For{$j = 1,\dots,\ell$}

			\State $\displaystyle
						u = \argmax_{s \in \calV \setminus \calG_{j-1}}
						\frac{
							\bv_u^H \bK^\star_{j-1} \bW \bK^\star_{j-1} \bv_u
						}{
							\lambda_{w,u}^{-1} + \bv_u^H \bK^\star_{j-1} \bv_u
						}$
				\Comment{$\calO(n \abs{\calK}^2)$}

			\State $\displaystyle
						\bK^\star_{j} = \bK^\star_{j-1} - 
						\bW \frac{
							\bK^\star_{j-1} \bv_u \bv_u^H \bK^\star_{j-1}
						}{
							\lambda_{w,u}^{-1} + \bv_u^H \bK^\star_{j-1} \bv_u
						}$
				\Comment{$\calO(\abs{\calK}^2)$}

			\State $\displaystyle
						\calG_j = \calG_{j-1} \cup \{u\}$
		\EndFor
	\end{algorithmic}
\end{algorithm}

Before proceeding with the simulations, the complexity issue of greedy sampling set selection must be addressed. The greedy search in Algorithm~\ref{L:greedy} requires~$n \ell c_f$ operations, where~$c_f$ is the cost of evaluating the objective~$f$. As it is, problem~\eqref{P:SSS} has~$c_f = \calO(\abs{\calK}^3)$. It can, however, be reduced using the matrix inversion lemma~\cite{Horn13}.

Indeed, start by noticing that the first step of the greedy approximation of problem~\eqref{P:SSS} involves finding~(see Algorithm~\ref{L:greedy})
\begin{equation*}
	u = \argmin_{s \in \calV} \trace \left[
		\bK^\star \left( \calG_{j-1} \cup \{s\} \right)
	\vphantom{\sum}\right]
		\text{,}
\end{equation*}
which, using the definition of~$\bK^\star$ in~\eqref{E:Kstar} and the circular commutation property of the trace, requires the evaluation of
\begin{multline*}
	\trace \left[
		\bK^\star \left( \calG_{j-1} \cup \{s\} \right)
	\vphantom{\sum}\right] ={}
	\\
	\trace \left[ \bW \left( \bLambda^{-1} +
		\sum_{i \in \calG_{j-1}} \lambda_{w,i}^{-1} \bv_i \bv_i^{H} +
		\lambda_{w,s}^{-1} \bv_s \bv_s^{H}  \right)^{-1} \right]
		\text{,}
\end{multline*}
where once again~$\bW = \bV_\calK^{H} \bH^{H} \bH \bV_\calK$. Letting~$\bK^\star_{j} = \bK^\star(\calG_{j})$ and using the matrix inversion lemma, we can reduce the update of~$\bK^\star$ to
\begin{equation}\label{E:MILUpdate}
	\bK^\star(\calG_{j} \cup \{s\}) = 
	\bK^\star_{j-1} - 
		\bW \frac{
			\bK^\star_{j-1} \bv_u \bv_u^{H} \bK^\star_{j-1}
		}{
			\lambda_{w,u}^{-1} + \bv_u^{H} \bK^\star_{j-1} \bv_u
		}
		\text{.}
\end{equation}
From linearity, it is then straightforward to see that finding the minimum of the trace of~\eqref{E:MILUpdate} is equivalent to finding the maximum of
\begin{equation}\label{E:greedyMIL}
	\trace \left[
		\bW \frac{
			\bK^\star_{j-1} \bv_u \bv_u^{H} \bK^\star_{j-1}
		}{
			\lambda_{w,u}^{-1} + \bv_u^{H} \bK^\star_{j-1} \bv_u
		} \right] =
	\frac{
		\bv_u^{H} \bK^\star_{j-1} \bW \bK^\star_{j-1} \bv_u
	}{
		\lambda_{w,u}^{-1} + \bv_u^{H} \bK^\star_{j-1} \bv_u
	}
		\text{.}
\end{equation}
The greedy sampling set selection procedure obtained by leveraging~\eqref{E:MILUpdate} and~\eqref{E:greedyMIL} is presented in Algorithm~\ref{L:greedyMSE}. This algorithm now requires only~$\calO(n \ell \abs{\calK}^2)$ operations.

\subsection{Simulations}
	\label{S:Simulations}

\begin{figure}[t]
\centering
\includesvg[width=\columnwidth]{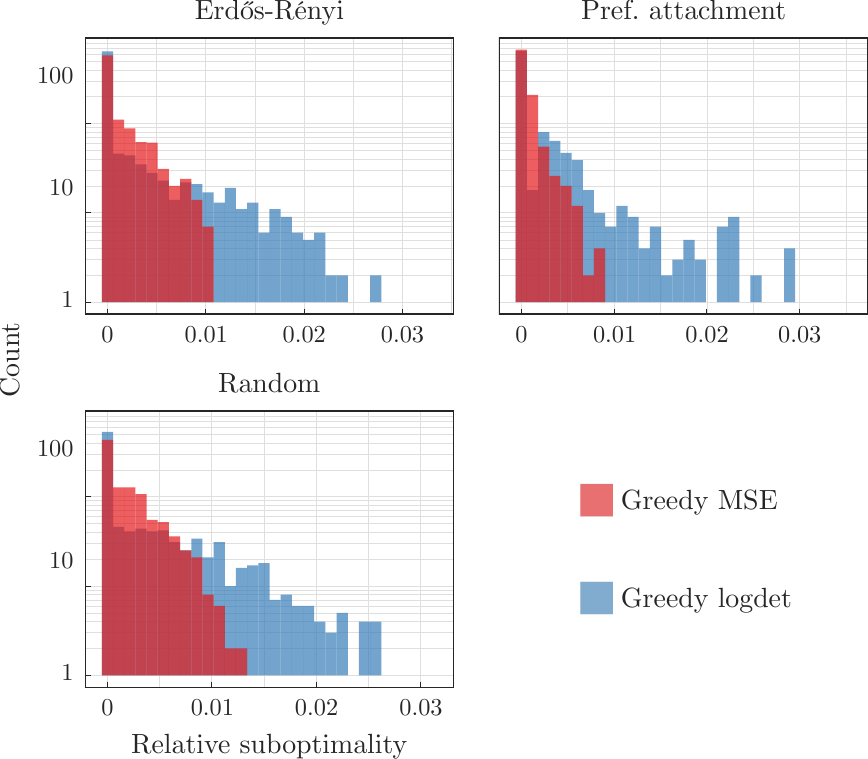}
\caption{Relative suboptimality of MSE and $\log\det$~($\text{SNR} = 20$~dB)}
	\label{F:logdet}
\end{figure}

\begin{figure*}[t]
\centering
\includesvg{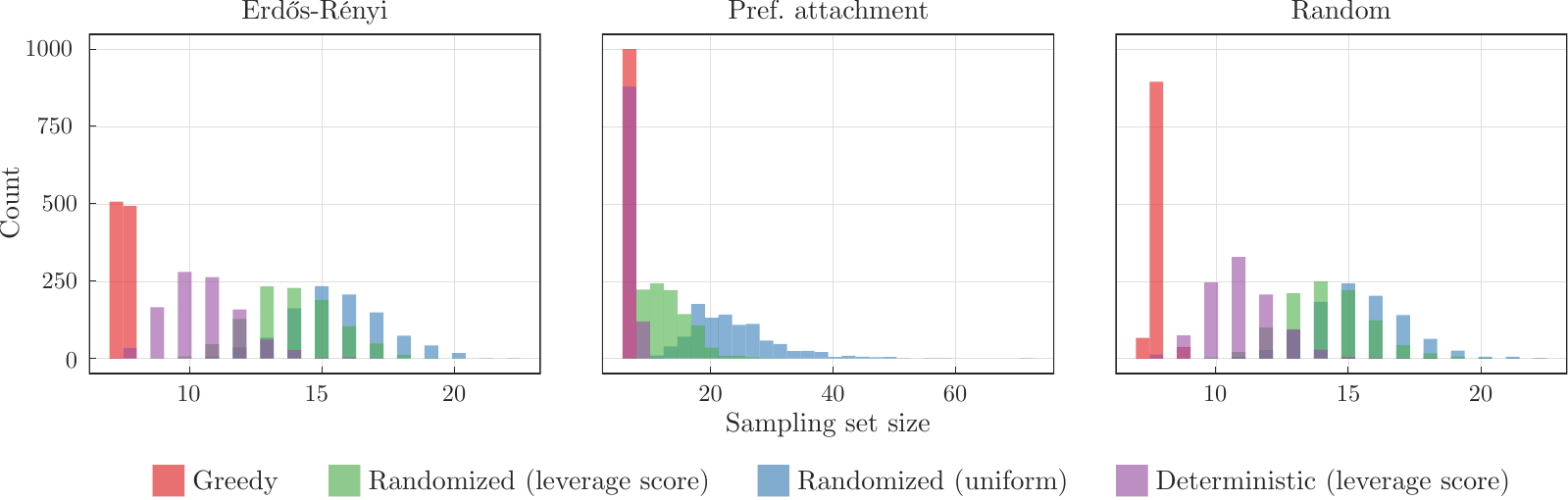}
\caption{Sampling set size for $90\%$ reduction of~MSE}
	\label{F:SetSize}
\end{figure*}

\begin{figure}[t]
\centering
\includesvg{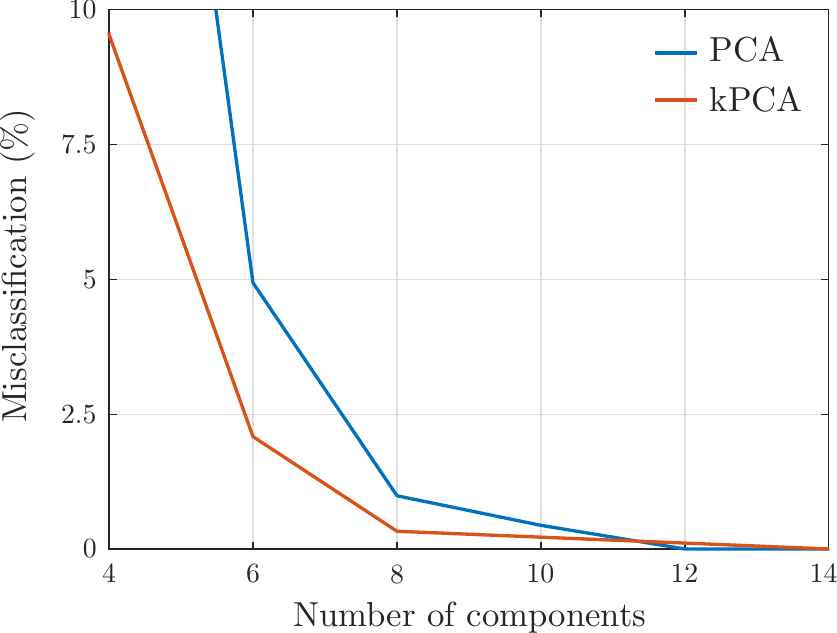}
\caption{Classification performance of PCA and kPCA}
	\label{F:PCAvskPCA}
\end{figure}

In this section, we start by evaluating the performance greedy sampling set selection~(Algorithm~\ref{L:greedyMSE}). For comparison, we also display the results obtained by the \emph{uniform} and \emph{leverage score} randomized methods from~\cite{Chen16s} and a \emph{deterministic} heuristic based on sampling nodes with the highest leverage score~($\norm{\bv_i}_2^2$). In the following examples, we use undirected graphs generated using the \emph{Erd\H{o}s-Rényi} model, in which an edge is placed between two nodes with probability $p = 0.2$; the \emph{preferential attachment} model~\cite{Barabasi99e}, in which nodes are added one at a time and connected to a node already in the graph with probability proportional to its degree; and a \emph{random undirected graph}, obtained by assigning a weight to all possible edges uniformly at random from~$[0,1]$. 

The figure of merit in the following simulations is the \emph{relative suboptimality} from~\eqref{E:relOptimality}. Since it depends on the optimal sampling set which needs to be determined by exhaustive search, we focus on graphs with~$n = 20$ nodes. Bandlimited graph signals are generated by taking~$\bV_\calK$ in~\eqref{E:bandlimited} to be the eigenvectors of the graph adjacency matrix relative to the five eigenvalues with largest magnitude~($\abs{\calK} = 5$). The random vectors~$\bxb$ in~\eqref{E:bandlimited} and~$\bw$ in~\eqref{E:y} are realizations of zero-mean Gaussian random variables with covariance matrices~$\bLambda = \bI$ and~$\bLambda_w = \sigma_w^2 \bI$, where~$\sigma_w^2$ is varied to obtain different SNRs. The transform in~\eqref{E:z} is taken to be the identity~($\bH = \bI$) and the sampling set size is chosen as~$\ell = \abs{\calK} = 5$.

Figures~\ref{F:LowSNR} and~\ref{F:HighSNR} display histograms of the relative suboptimality for $1000$~realizations of graphs and graph signals with~$\sigma_w^2 = 10^2$~($\text{SNR} = -20$~dB) and~$\sigma_w^2 = 10^{-2}$~($\text{SNR} = 20$~dB), respectively. As predicted by Theorem~\ref{T:main}, greedy sampling set selection performs better in low SNR environments, where the optimal sampling set was obtained in more than~$95$\% of the realizations. Nevertheless, even in high SNRs, it found the optimal sampling set almost half of the time. In fact, note that Algorithm~\ref{L:greedyMSE} typically performs much better than the bounds in Theorem~\ref{T:main}~(see details in Fig.~\ref{F:HighSNR}). For comparison, results for greedily optimizing~$\log\det\left[ \bK^\star(\calS) \right]$, a supermodular function, are shown in Figure~\ref{F:logdet}. Although the MSE is $\alpha$-supermodular with~$\alpha < 1$, the relative suboptimality obtained by using both cost functions is comparable.

It is worth noting that, although the deterministic leverage score ranking technique often yields good results, there are advantages to greedy sampling set selection, specially for higher SNR. The randomized sampling schemes, on the other hand, do not perform as well for single problem instances. To be fair, these methods are more appropriate when several sampling sets of the same graph signal are considered. Indeed, the performance measures in~\cite{Chen16s} hold in expectation over sampling realizations.

Evaluating the relative suboptimality for larger graphs is untractable. However, since these sampling set selection techniques build the sampling set sequentially, we can assess their performance in terms of the sampling set size required to obtain a given MSE reduction. Figure~\ref{F:SetSize} displays the distribution of the sampling set size required to achieve a~$90\%$ reduction in the MSE with respect to the empty set. The plots are obtained from~$1000$~graphs and signals realizations with $n = 100$~nodes, $\bV_\calK$ in~\eqref{E:bandlimited} composed the eigenvectors relative to the seven eigenvalues with largest magnitude~($\abs{\calK} = 7$), and $\sigma_w^2 = 10^{-2}$. Although Theorem~\ref{T:main} estimates that Algorithm~\ref{L:greedyMSE} requires sets considerably larger to recover the same near-optimal guarantees as supermodular functions, greedy sampling obtained a sampling set of size exactly~$\abs{\calK}$ in more than~$50\%$ of the realizations. Moreover, as noted in~\cite{Chen16s}, we can now see that leverage score sampling has similar performance to uniform sampling for Erd\H{o}s-Rényi graphs, but gives better results for the preferential attachment model.

\subsection{Application: Subsampled Kernel PCA}
	\label{S:kPCA}

\begin{figure}[t]
\centering
\includesvg{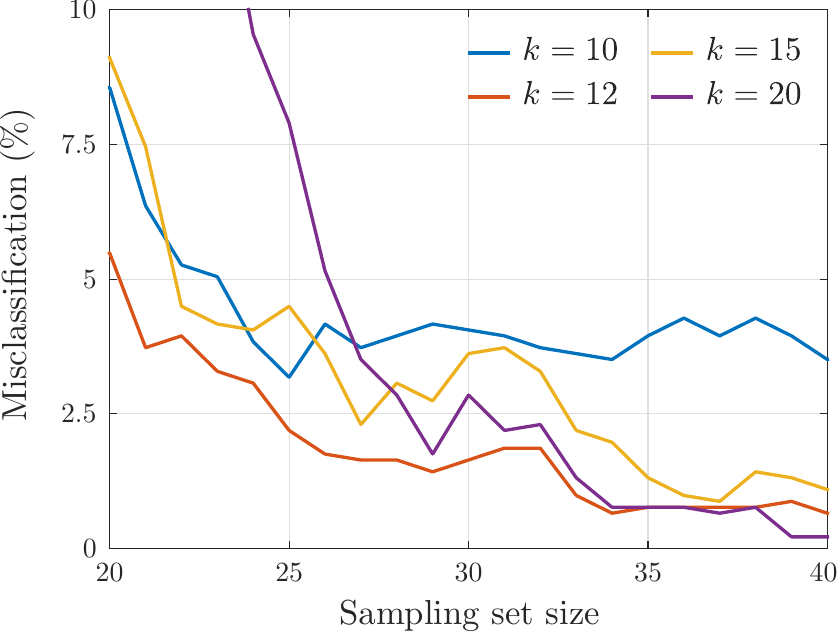}
\caption{Classification performance of greedy subsampled kPCA}
	\label{F:skPCA}
\end{figure}

Kernel PCA is a nonlinear version of PCA~\cite{Scholkopf98n} that also identifies data subspaces by truncating the eigenvalue decomposition~(EVD) of a Gram matrix~$\bm{\Phi}$. However, whereas PCA uses the empirical covariance matrix, kPCA constructs~$\bm{\Phi}$ by evaluating inner products between data points in a higher dimensional space~$\mathbb{F}$ known as the \emph{feature space}. Since the map~$\varphi: \setR^m \to \mathbb{F}$ can be nonlinear and~$\mathbb{F}$ typically has infinite dimensionality, kPCA results in richer subspaces than PCA~\cite{Scholkopf98n, Jeronimo13k, Bishop07p}.

Naturally, the dimensionality of~$\mathbb{F}$ poses a challenge for constructing the Gram matrix. This problem is addressed using the so called \emph{kernel trick}~\cite{Scholkopf98n, Jeronimo13k, Bishop07p}. A kernel is a function~$\kappa$ that allows the inner product in~$\mathbb{F}$ to be evaluated directly from vectors in~$\setR^m$, i.e., $\kappa(\br,\bs) = \langle \varphi(\br), \varphi(\bs) \rangle_\mathbb{F}$. We can use~$\kappa$ to construct~$\bm{\Phi}$ from a training set~$\{\bu_i\}_{i = 1,\dots,n}$, $\bu_i \in \setR^m$, as in
\begin{equation}\label{E:kernelMatrix}
	\bm{\Phi} = \left[ \kappa(\bu_i,\bu_j) \right]_{i,j = 1,\dots,n}
		\text{.}
\end{equation}
Kernel PCA identifies the data subspace as the span of the first~$k$ eigenvectors of~$\bm{\Phi}$, i.e., as~$\colspan(\bV_{\calK})$, where~$\bm{\Phi} = \bV \bLambda \bV^H$ is the EVD of~$\bm{\Phi}$ with eigenvalues in decreasing order and~$\calK = 1,\dots,k$. Using the representer's theorem~\cite{Bishop07p}, any data point~$\by$ can be projected onto this subspace by
\begin{equation}\label{E:kernelProj}
	\byb = \bV_\calK^H \byt
		\text{,} \quad \byt = \left[ \kappa(\bu_i, \by) \right]_{i = 1,\dots,n}
		\text{.}
\end{equation}

\begin{figure}[t]
\centering
\includesvg{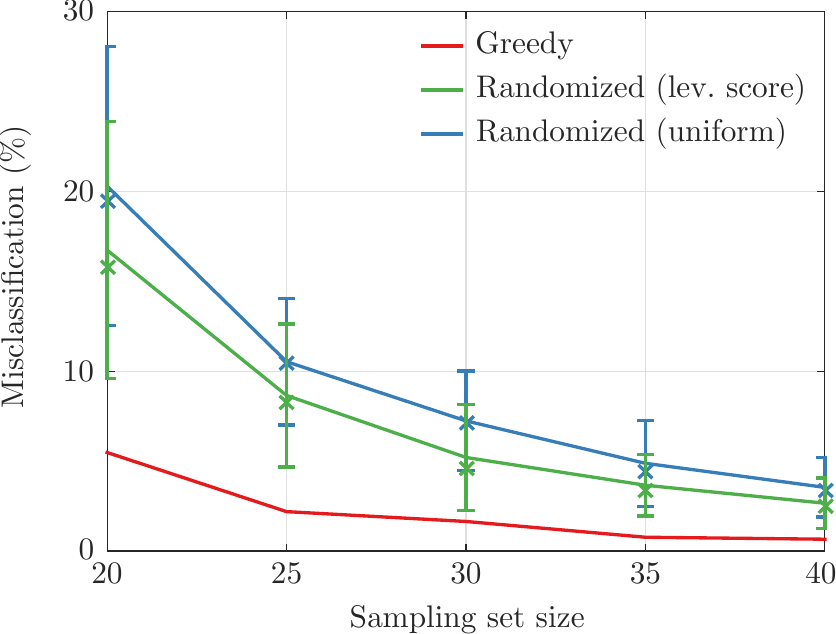}
\caption{Classification performance of subsampled kPCA for different sampling schemes~($k = 12$~components): mean~(line), median~($\times$), and error bars~(one standard deviation) based on~$100$ sampling realizations.}
	\label{F:skPCA_k12}
\end{figure}

The projection in~\eqref{E:kernelProj} requires~$\Theta(kn)$~operations and $n$~KEs, making this method impractical for large data sets even if the dimension~$k$ of the subspace of interest is small. Indeed, although the training phase in~\eqref{E:kernelMatrix} is usually performed offline, \eqref{E:kernelProj} needs to be evaluated during the operation phase for every new data point. In~\cite{Tipping00s}, this issue was addressed by using a Gaussian generative model for~$\bm{\Phi}$ and showing that its maximum likelihood estimate depends only on a subset of the~$\bu_i$. Another approach is to impose sparsity on~$\bV$ \emph{a priori} so that it depends only on a reduced number of training points~\cite{Jeronimo13k}. Alternatively, one can find a representative subset of the training data and apply kPCA to that subset~\cite{Washizawa09s}. The issue with the latter method is that finding a good data subset is known to be a hard problem~\cite{Woodruff14s, Feldman13t}. In fact, it is related to the problem of sampling set selection in GSP.

Indeed, since we used the same notation as in Section~\ref{S:GraphSignal}, formulating kPCA in the context of GSP is straightforward. Let the graph~$\graph$ have adjacency matrix~$\bA = \bm{\Phi}$, which is symmetric and normal, so that~\eqref{E:kernelProj} has the form of a~(partial) graph Fourier transform~\eqref{E:bandlimited}. In other words, \eqref{E:kernelProj} can be interpreted as enforcing graph signals of the form~$\byt$ to be bandlimited on~$\bm{\Phi}$. Thus, we can apply the sampling and interpolation theory from GSP to put forward a \emph{subsampled~kPCA}.

Based on the guarantees given in Section~\ref{S:greedySampling}, we use greedy search to obtain a sampling set~$\calS$ and use the interpolation techniques from Section~\ref{S:Interpolation} to recover~$\byt$ from its samples as in
\begin{equation}\label{E:BLyS}
	\byt = \bL^\star \byt_\calS
		\text{.}
\end{equation}
Then, \eqref{E:kernelProj} and~\eqref{E:BLyS} yield
\begin{equation}\label{E:kernelSampleProj}
	\byb = \underbrace{\bV_\calK^H \bL^\star}_{\bP}
		\byt_\calS
		\text{.}
\end{equation}
Notice that~$\bP$ is now~$k \times \abs{\calS}$, so that the projection in~\eqref{E:kernelSampleProj} only takes~$\Theta(k \abs{\calS})$ operations and~$\abs{\calS}$~KEs, leading to a considerable complexity reduction~($\abs{\calS}/n$) over the direct projection in~\eqref{E:kernelProj}. Moreover, kPCA is typically used for dimensionality reduction prior to regression or classification, so that we are actually interested in a linear transformation of~$\byb$. Subsampled kPCA can account for this case by properly choosing~$\bH$ in~\eqref{E:z}. It is worth noting that contrary to~\cite{Washizawa09s}, the full dataset is used during the training stage to obtain~$\bV_{\calK}$. However, once~$\bP$ is determined, only the subset~$\calS$ is required.

\begin{figure}[t]
\centering
\includesvg{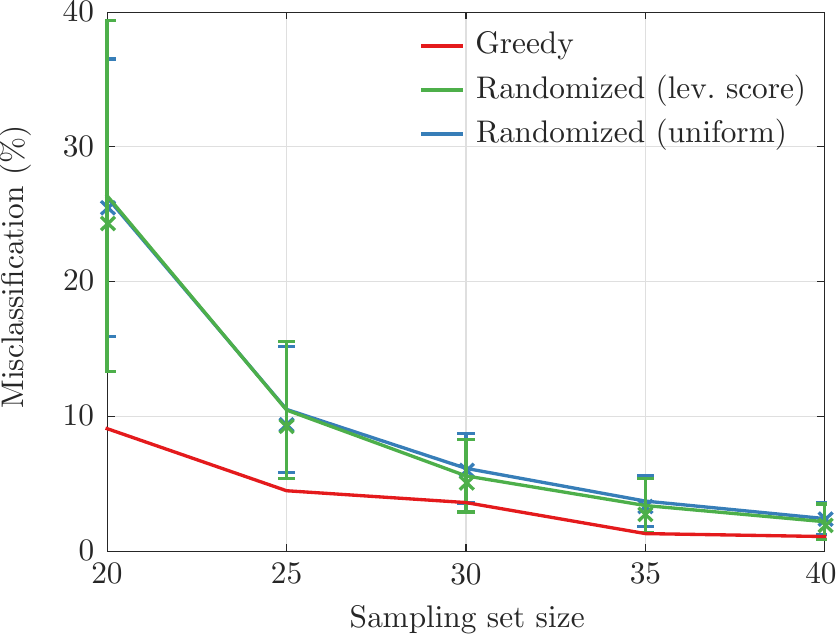}
\caption{Classification performance of subsampled kPCA for different sampling schemes~($k = 15$~components): mean~(line), median~($\times$), and error bars~(one standard deviation) based on~$100$ sampling realizations.}
	\label{F:skPCA_k15}
\end{figure}

In the sequel, we illustrate this method in a face recognition application using the \emph{faces94} data set~\cite{faces94}. It contains~$20$ pictures~($200 \times 180$) of~$152$ individuals which were converted to black and white and normalized so that the value of each pixel is in~$[-1,1]$. A training set is obtained by randomly choosing~$14$ images for each individual~($70$\% of the data set) and the remaining pictures are used for testing. In this application, we use a polynomial kernel of degree~$d = 2$~\cite{Bishop07p} and a one-against-one multiclass support vector machine~(SVM) classifier, in which an SVM is trained for each pair of class and the classification is obtained by majority voting~(see~\cite{Hsu02c} for details on this scheme). Finally, note that since images in both training and testing sets come from the same data set there is no observation noise~$\bw$. Still,~$\sigma_w^2$ can be used to regularize the matrix inversions in~\eqref{E:normalEq} and~\eqref{E:Kstar}~\cite{Kailath00l}.

Figure~\ref{F:PCAvskPCA} shows the misclassification percentage on the test set as a function of the number of components~($k$) for both PCA and kPCA. Note that kPCA can achieve the same performance as PCA with less components. The results of using greedy subsampled kPCA are shown in Figure~\ref{F:skPCA} and clearly illustrate the trade-off between complexity and performance: as the sampling set size increases, the classification errors decrease. However, since misclassification is a nonlinear function of the MSE, it may be advantageous to use more components instead of increasing the sampling set. For instance, kPCA requires~$k = 7$ components to achieve a misclassification of~$1$\%, so that evaluating the direct projection in~\eqref{E:kernelProj} takes $2128$~KEs and $29785$~operations. Greedy subsampled kPCA, on the other hand, can achieve the same performance with~$k = 12$ components and~$\abs{\calS} = 33$, i.e., $33$~KEs and $780$~operations, a complexity reduction of more than~$97$\%. Nevertheless, using $7$~components, greedy subsampled kPCA would require~$\calS$ to be almost the full training set.

Naturally, the method in~\eqref{E:kernelMatrix}--\eqref{E:kernelSampleProj} is not restricted to sampling sets obtained greedily. Thus, in Figures~\ref{F:skPCA_k12} and~\ref{F:skPCA_k15} we compare greedy sampling to the other methods from Section~\ref{S:Simulations} now based on their misclassification performance for~$12$ and~$15$ components. We omit the deterministic leverage score results because it performed consistently worst than the other methods. The average misclassification rates for the randomized schemes are from~$1$ to~$15\%$ higher than those of greedy sampling. Although some realizations yield good classification, their performance varies a lot, especially for smaller sampling sets. Comparing Figures~\ref{F:skPCA_k12} and~\ref{F:skPCA_k15}, it is ready that in this application the difference between uniform and leverage score sampling becomes less significant as the number of components increases.

\begin{remark}

Although this section discussed kPCA, the same argument applies to the classical PCA. It is therefore straightforward to derive an analog \emph{subsampled PCA} technique using~\eqref{E:kernelMatrix}--\eqref{E:kernelSampleProj}.

\end{remark}

\section{Conclusion}

This work provided a solution to graph signal sampling problems by addressing the issue of sampling set selection in two ways. First, it derived universal bounds on the interpolation MSE~(Theorem~\ref{T:mseBound}) which allow the quality of any sampling set or sampling heuristic to be evaluated by gauging how close their reconstruction performance is to the lower bound. Second, it provided near-optimality results for greedy MSE minimization~(Theorem~\ref{T:main}), demonstrating that greedy sampling set selection is an effective sampling scheme, justifying its empirical success in the literature.
The strength of Theorem~\ref{T:main} is that it gives a worst-case result: there exists no graph or graph signal for which the relative suboptimality of greedy sampling is worst than~$e^{-\alpha}$. In fact, greedy sampling typically performs much better and consistently across graph signal realizations. In contrast, although randomized sampling schemes can be effective, their performance can vary widely across realizations.
We should note that given the generality of Theorems~\ref{T:greedy} and~\ref{T:main}, it is likely that they can be applied beyond GSP for sensor placement, experimental design, and variable selection. Moreover, despite the MSE's ubiquity in signal processing, other performance metrics are sometimes more appropriate and we foresee that the theory from this paper can be extended to these cases. In particular, we believe that the concept of approximate submodularity can be used to provide guarantees for the greedy minimization of other non-supermodular functions.

\appendices

\section{Proof of Lemma~\ref{T:MSEFull}}
	\label{A:ThmMseFull}

\begin{proof}

Once again, part~(i) is a corollary of Lemma~\ref{T:monotonicity}. To prove part~(ii), we proceed as for Lemma~\ref{T:MSE}. However, we now let $\bZ(\calA) = \bLambda^{-1} + \sum_{i \in \calA} \lambda_{w,i}^{-1} \bv_i^{} \bv_i^H$ so that again the increment in~\eqref{E:alpha} reads
\begin{multline*}
	f(\calA \cup \{u\}) - f(\calA) ={}
	\\
	\trace \left[
		\bW \left( \bZ(\calA) + \lambda_{w,u}^{-1} \bv_u^{} \bv_u^H \right)^{-1}
		- \bW \bZ(\calA)^{-1}
	\right]
		\text{,}
\end{multline*}
which using the matrix inversion simplifies to
\begin{align*}
	f(\calA \cup \{u\}) - f(\calA) &=
	- \trace \left[
		\bW \frac{
			\bZ(\calA)^{-1} \bv_u \bv_u^H \bZ(\calA)^{-1}
		}{
			\lambda_{w,u}^{-1} + \bv_u^H \bZ(\calA)^{-1} \bv_u
		}
	\right]
	\\
	{}&=
	- \frac{
		\bv_u^H \bZ(\calA)^{-1} \bW \bZ(\calA)^{-1} \bv_u
	}{
		\lambda_{w,u}^{-1} + \bv_u^H \bZ(\calA)^{-1} \bv_u
	}
		\text{.}
\end{align*}
Using this expression, the expression for $\alpha$ in~\eqref{E:alpha} becomes
\begin{equation}\label{E:alphaMSEFull2}
\alpha = \min_{\substack{\calA \subseteq \calB \subseteq \calV \\ u \notin \calB}}
		\frac{
			\lambda_{w,u}^{-1} + \bv_u^H \bZ(\calB)^{-1} \bv_u
		}{
			\lambda_{w,u}^{-1} + \bv_u^H \bZ(\calA)^{-1} \bv_u
		}
		\frac{
			\bv_u^H \bZ(\calA)^{-1} \bW \bZ(\calA)^{-1} \bv_u
		}{
			\bv_u^H \bZ(\calB)^{-1} \bW \bZ(\calB)^{-1} \bv_u
		}
	\text{.}
\end{equation}
Note that, although similar, \eqref{E:alphaMSEFull2} is not the same as~\eqref{E:alpha2}.

We now bound~\eqref{E:alphaMSEFull2} by noticing that for any set~$\calX \subseteq \calV$
\begin{multline*}
	\mu_\text{min} \leq
	\lambda_\text{min} \left[ \bLambda^{-1} \right] \leq
	\lambda_\text{min} \left[ \bZ(\calX) \right] \leq{}
	\\
	\lambda_\text{max} \left[ \bZ(\calX) \right] \leq
	\lambda_\text{max} \left[ \bLambda^{-1} + \bV_\calK^H \bLambda_{w}^{-1} \bV_\calK \right]
	\leq \mu_\text{max}
		\text{.}
\end{multline*}
Thus, using the Rayleigh quotient inequalities leads to
\begin{equation*}
\alpha \geq \frac{
	\lambda_{w,u}^{-1} + \norm{\bv_u}_2^2 \lambda_\text{max} \left[ \bZ(\calB) \right]^{-1}
}{
	\lambda_{w,u}^{-1} + \norm{\bv_u}_2^2 \lambda_\text{min} \left[ \bZ(\calA) \right]^{-1}
}
\frac{
	\lambda_\text{min} \left[ \bZ(\calA)^{-1} \bW \bZ(\calA)^{-1} \right]
}{
	\lambda_\text{max} \left[ \bZ(\calB)^{-1} \bW \bZ(\calB)^{-1} \right]
}
	\text{,}
\end{equation*}
which can be simplified using the same singular value bounds as in Lemma~\ref{T:MSE}~\cite[Thm.~9.H.1, p.~338]{Marshall09i} to yield
\begin{equation}\label{E:alphaMSEFull3}
\alpha \geq \frac{
	\lambda_{w,u}^{-1} + \norm{\bv_u}_2^2 \mu_\text{max}^{-1}
}{
	\lambda_{w,u}^{-1} + \norm{\bv_u}_2^2 \mu_\text{min}^{-1}
}
\frac{
	\mu_\text{max}^{-2}
}{
	\kappa_2(\bW) \, \mu_\text{min}^{-2}
}
\triangleq \alpha^\prime
	\text{,}
\end{equation}
where again~$\kappa_2(\bW) = \lambda_\text{max}(\bW) / \lambda_\text{min}(\bW)$ is the 2-norm condition number of $\bW$. To obtain the expression in~\eqref{E:alphaMSEFull}, notice that~\eqref{E:alphaMSEFull3} is decreasing with respect to~$\norm{\bv_u}_2^2$ and~$\lambda_{w,u}^{-1}$. Indeed,
\begin{align*}
	\frac{\partial \alpha^\prime}{\partial \norm{\bv_u}_2^2} =
	\frac{
		\mu_\text{max}^{-2}
	}{
		\kappa_2(\bW) \, \mu_\text{min}^{-2}
	}
	\frac{
		\lambda_{w,u}^{-1}
		\left( \mu_\text{max}^{-1} - \mu_\text{min}^{-1} \right)
	}{
		\left( \lambda_{w,u}^{-1} +
			\norm{\bv_u}_2^2 \mu_\text{min}^{-1} \right)^2
	} \leq 0
	\\
	\frac{\partial \alpha^\prime}{\partial \lambda_{w,u}} =
	\frac{
		\mu_\text{max}^{-2}
	}{
		\kappa_2(\bW) \, \mu_\text{min}^{-2}
	}
	\frac{
		\lambda_{w,u}^{-2} \norm{\bv_u}_2^2
		\left( \mu_\text{max}^{-1} - \mu_\text{min}^{-1} \right)
	}{
		\left( \lambda_{w,u}^{-1} +
			\norm{\bv_u}_2^2 \mu_\text{min}^{-1} \right)^2
	} \leq 0
\end{align*}
are both non-positive because~$0 < \mu_\text{min} \leq \mu_\text{max}$ and~$\kappa_2(\bW) \geq 1$~\cite{Horn13}. We then use the fact that~$\norm{\bv_u}_2^2 \leq 1$ to get~\eqref{E:alphaMSEFull}.
\end{proof}

\IEEEtriggeratref{27}
\bibliographystyle{IEEEtran}
\bibliography{IEEEabrv,gsp,sp,math}

\end{document}